\newcommand\bbR{\mathbb R}
\newcommand\bx{\boldsymbol x}
\newcommand\xb{\boldsymbol x}
\newcommand\yb{\boldsymbol y}
\newcommand\bE{{\bf E}}
\newcommand\E{{\bf E}}
\newcommand\bH{{\bf H}}
\newcommand\bA{\boldsymbol A}
\newcommand\A{\boldsymbol A}
\newcommand\J{\boldsymbol J}
\newcommand\K{\boldsymbol K}
\newcommand\bb{\boldsymbol b}
\newcommand\ba{\boldsymbol a}
\newcommand\In{\operatorname{inc}}
\newcommand\bn{\boldsymbol n}
\newtheorem{remark}{Remark}
\newtheorem{definition}{Definition}
\newtheorem{thm}{Theorem}
\newtheorem{lemma}{Lemma}
\title{Decoupled field integral equations for 
electromagnetic scattering from homogeneous penetrable obstacles}
\author{Felipe Vico\footnote{
Instituto de Telecomunicaciones y Aplicaciones Multimedia (ITEAM),
Universidad Polit\` ecnica
de Val\` encia, 46022 Val\` encia, Spain. {{\em email}:\ {\tt
{felipe.vico@gmail.com}}}, {{\sf {mferrand@dcom.upv.es.}}} },
Leslie Greengard\footnote{Courant Institute, New York University, New
  York, NY and Center for Computational Biology, Flatiron Institute, 
New York, NY. {\em email}:
  {\tt greengard@cims.nyu.edu}}, and
Miguel Ferrando\footnotemark[1]
}
\begin{document}
\maketitle
\begin{abstract}
We present a new method for the analysis of 
electromagnetic scattering from homogeneous penetrable bodies. 
Our approach is based on a reformulation of the governing 
Maxwell equations in terms of two uncoupled vector Helmholtz systems: 
one for the electric field and one for the magnetic field.
This permits the derivation of 
resonance-free Fredholm equations of the second kind that
are stable at all frequencies,
insensitive to the genus of the scatterers, and invertible for
all passive materials including those with negative permittivities 
or permeabilities.
We refer to these as {\em decoupled field integral equations}.
\end{abstract}

\section{Introduction}

A standard problem in electromagnetics concerns the solution of 
of the Maxwell equations in an exterior domain, when an 
incoming wave is scattered by a collection
of bounded penetrable obstacles.
There is a vast literature on integral equation methods for such problems,
with standard formulations for the piecewise constant case 
described, for example, in the papers \cite{Glisson,Marx,Mautz,Muller,PMCHW}.
Boundary integral methods are natural in the homogeneous
(piecewise constant) setting, since they
require only the discretization of the inclusion boundaries rather than the
domain, satisfy the Silver-M\"{u}ller radiation condition exactly,
and can be coupled with suitable fast algorithms,
such as the fast multipole method (FMM). 

Working in the frequency domain and assuming a time dependence 
of the form $e^{-i\omega t}$, Maxwell's equations 
in a linear, isotropic material 
are given by 
\begin{align}
  \nabla \times {\bf H}^{tot} &= -i\omega \epsilon_L \,  {\bf E}^{tot},
\label{meq1} \\
  \nabla \times {\bf E}^{tot} &= i\omega \mu_L {\bf H}^{tot},  \nonumber
\end{align}
where
$\epsilon_L$ and $\mu_L$ are the local permittivity and permeability of the medium,
respectively. 
Here, ${\bf E}^{tot}$ and ${\bf H}^{tot}$ denote the total 
electric and magnetic fields which we express in the form 
\begin{equation}
  {\bf E}^{tot} = {\bf E}^{in} + {\bf E}_0, \quad
  {\bf H}^{tot} = {\bf H}^{in} + {\bf H}_0  \label{fbgvi3a} ,
\end{equation}
in the exterior region and 
\begin{equation}
  {\bf E}^{tot} = {\bf E}, \quad
  {\bf H}^{tot} = {\bf H}  \label{fbgvi3ain} ,
\end{equation}
within the inclusions. Here, $\{ {\bf E}^{in}, {\bf H}^{in} \}$ is a
known {\em incoming} electromagnetic field.

For the sake of simplicity, we will assume that the penetrable obstacles
are all made of the same material and that they
define a compactly supported open region $D$ in $\mathbb{R}^3$ 
whose boundary, denoted by  $\partial D$, consists of 
a finite number of disjoint, closed 
surfaces belonging to class $C^2$. We also assume that the 
exterior region $\mathbb{R}^3/\overline{D}$ is connected.
We will denote by $\epsilon_0,\mu_0$ the material properties of the exterior
domain and by $\epsilon,\mu$ the material properties of the inclusions.

At obstacle boundaries, the Maxwell equations (\ref{meq1}) must satisfy 
the continuity conditions
\cite{Jackson,Papas}:
\begin{equation}
\label{mwaveint1} 
\begin{aligned}
 {\bf n} \times ({\bf E}_0 - \bE ) &= -
  {\bf n} \times {\bf E}^{in} 
    \\
 {\bf n} \times ({\bf H}_0 - \bH ) &= -
  {\bf n} \times {\bf H}^{in},
\end{aligned}
\end{equation}
where ${\bf n}$ denotes the outward normal vector to the interface.
\begin{definition}
The boundary value problem defined by \eqref{meq1} with 
$\{ \epsilon_L,\mu_L \} = \{ \epsilon,\mu \}$ inside the inclusions and 
$\{ \epsilon_L,\mu_L \} = \{ \epsilon_0,\mu_0 \}$ in the exterior domain,
together with the interface conditions \eqref{mwaveint1},
will be referred to as the 
{\em Maxwell transmission problem}.
\end{definition}

It is well-known that, under mild assumptions on the 
permittivity and permeability
\cite{homogeneous_uniqueness_3,Muller},
the Maxwell transmission problem has a unique solution, so long as 
the scattered field satisfied the
Silver-M{\" u}ller radiation condition:
\begin{equation}\label{radEH}
\begin{array}{ll}
\sqrt{\frac{\mu_0}{\epsilon_0}}\bH_0(\bx)\times\frac{\bx}{|\bx|}-\bE_0(\bx)=o\Big(\frac{1}{|\bx|}\Big), \quad &|\bx|\rightarrow \infty\, .
\end{array}
\end{equation}
We will restrict our attention in this paper to the case
$\omega \geq 0$ with $\epsilon_0,\mu_0>0$ and
\begin{equation}\label{microwave_cond}
\begin{aligned}
\Im{(\epsilon)}&>0 \text{ or } \epsilon \in \bbR^+ \\
\Im{(\mu)}&> 0 \text{ or } \mu \in \bbR^+ ,
\end{aligned}
\end{equation}
where $\bbR^+$ denotes the positive real numbers.
For the sake of completeness, we provide a simple proof of uniqueness
for this parameter regime (Appendix \ref{maxuniq}), which includes
all metamaterials wth non-zero dissipation.

The standard approach to the development of integral equation methods is 
based on the classical vector and scalar potentials
\cite{Papas}:
We let
\[ \A_0(\xb) = \mu_0 S_{k_0}[\J_0](\xb),\ \ 
\A(\xb) = \mu S_{k}[\J](\xb), \]
where
\begin{equation}
S_{k}[\J](\xb) \equiv  \int_{\partial D}  g_k(\xb-\yb) \, \J(\yb) \, 
ds_{\yb} \, , 
\label{Skvecdef}
\end{equation}
$g_k(\xb) = e^{i k |\xb\|}/\|\xb\|$,
$k = \sqrt{ w^2 \epsilon \mu}$, 
and $k_0 = \sqrt{ w^2 \epsilon_0 \mu_0}$, 
Here, $\J,\J_0$ can be viewed as surface electric currents.
When the argument of the square root is 
complex, $k$ is taken to lie in the upper half-plane.
We define the vector anti-potentials by 
\[ {\tilde \A}_0(\xb) = \epsilon_0  S_{k_0}[\K_0](\xb),\ \ 
 {\tilde \A}(\xb) = \epsilon S_{k}[\K](\xb),
\]
where $\K,\K_0$ can be viewed as surface magnetic currents.
The scalar potentials and antipotentials are given by 
\[ \phi_0(\xb) = 
\frac{1}{i\omega \epsilon_0\mu_0} \nabla \cdot \A_0,
\ \ \phi(\xb) =  
\frac{1}{i\omega \epsilon\mu} \nabla \cdot \A,
\]
and 
\[ \psi_0(\xb) = 
\frac{1}{i\omega \epsilon_0\mu_0} \nabla \cdot {\tilde  \A}_0,
\ \ \psi(\xb) = 
\frac{1}{i\omega \epsilon\mu} \nabla \cdot {\tilde  \A}.
\]
From these, we may write
\begin{equation}
\label{Eeval}
\begin{aligned}
\E_0 &= i \omega \A_0 - \nabla \phi_0 - 
\frac{1}{\epsilon_0} \nabla \times {\tilde \A}_0,  \\
\bH_0 &= \frac{1}{\mu_0} \nabla \times \A_0 +
i \omega {\tilde \A}_0 - \nabla \psi_0
\end{aligned}
\end{equation}
for $\xb$ in the exterior, and 
\begin{equation}
\label{Heval}
\begin{aligned}
\E &= i \omega \A  -\nabla \phi - 
       \frac{1}{\epsilon} \nabla \times {\tilde \A},  \\
\bH &= \frac{1}{\mu} \nabla \times \A 
+ i \omega {\tilde \A} - \nabla \psi
\end{aligned}
\end{equation}
for $\xb$ inside the inclusions.

The number of degrees of freedom in the representation
is reduced by assuming that 
the tangential vector fields
$\J_0,\J, \K_0, \K$ satisfy
\begin{equation}
\J^{s}:= \epsilon_0 \J_0 =  \epsilon \J   \qquad 
\K^{s} := \mu_0 \K_0= \mu \K   \, .
\label{J0def}
\end{equation}
Imposing the conditions 
(\ref{mwaveint1}) and using the relations \eqref{rhodef}
yields M\"{u}ller's integral equation \cite{Muller} for $\J^s,\K^s$:

{\small
\begin{eqnarray*}
-\bn \times \E^{\In} = - \left( \frac{\mu_0 + \mu}{2} \right)\K^s-\left(\mu_0\nabla\times S_{k_0}-\mu\nabla\times S_{k}\right)[\K^s]+i\omega\left( \epsilon_0\mu_0 S_{k_0}- \epsilon\mu S_{k}\right)[\J^s]\\
-\frac{1}{i\omega}\left(\nabla\nabla\cdot S_{k_0}-\nabla\nabla\cdot S_{k}\right)[\J^s]
\nonumber
\end{eqnarray*}

\begin{eqnarray*}
-\bn \times \bH^{\In} =  \left( \frac{\epsilon_0 + \epsilon}{2} \right)\J^s+\left(\epsilon_0\nabla\times S_{k_0}-\epsilon\nabla\times S_{k}\right)[\J^s]+i\omega\left( \epsilon_0\mu_0 S_{k_0}- \epsilon\mu S_{k}\right)[\K^s]\\
-\frac{1}{i\omega}\left(\nabla\nabla\cdot S_{k_0}-\nabla\nabla\cdot S_{k}\right)[\K^s]
\nonumber
\end{eqnarray*}
}

In M\"{u}ller's original formulation, the unknowns are 
actually the tangential components of $\bE$ and $\bH$ rather than fictitious 
currents, and the integral equation above is the 
dual of the formulation in \cite{Muller}. 
We refer the reader to
\cite{CK1,CK2,Calderon_PMCHW,homogeneous_uniqueness_3,FMPS,
homogeneous_uniqueness_8,Yla-Oijala}
for further details. 

It suffices, for our present purposes, to 
note that all of the operators appearing above turn out to be compact and
that M\"{u}ller's integral equation is a resonance-free 
Fredholm equation of the second kind. It is 
uniquely solvable for the passive materials under consideration here.
Moreover, it can be shown that 
the integral equation is stable even in the low-frequency 
regime.  Unfortunately, the representation 
itself is subject to ``low-frequency breakdown". That is, once the 
currents are known, the evalution of $\bE$ or $\bH$ 
from \eqref{Eeval}, \eqref{Heval} is unstable as
$\omega \rightarrow 0$, because of catastrophic cancellation in the 
scalar potentials.

Rather than writing the scalar 
potentials and antipotentials as above, however, we may write
\begin{equation}
\phi_0(\xb) = 
  S_{k_0}[\rho](\xb) 
\ \ \phi(\xb) =  
  S_k[\rho](\xb), 
\label{slpdef1}
\end{equation}
and 
\begin{equation}
\psi_0(\xb) = 
 S_{k_0}[\rho_M](\xb),
\ \ \psi(\xb) = 
 S_k[\rho_M](\xb),
\label{slpdef2}
\end{equation}
where
\begin{equation}
S_k[\rho](\xb) \equiv \int_{\partial D}  g_k(\xb-\yb) \, 
\rho(\yb) \, ds_{\yb} \, ,
\label{Skscalardef}
\end{equation}
and 
\begin{equation}
\label{rhodef}
\rho = \nabla_S \cdot \J/(i \omega), \quad
\rho_M = \nabla_S \cdot \K/(i \omega).
\end{equation}
Here, $\nabla_S$ denotes the surface divergence operator.

\begin{remark}
The problem of low-frequency breakdown lies in the computation of 
$\rho, \rho_M$. Since ill-conditioning occurs even for a 
fixed $\omega$ under mesh
refinement, a more descriptive term is, perhaps, dense-mesh breakdown
\cite{Valdes}, which we will use interchangeably.
\end{remark}

One mechanism to overcome dense-mesh breakdown is to solve two 
auxiliary integral equations for the scalar potentials 
\cite{MullerAuxDavos},
using the representation
\eqref{Eeval} and \eqref{Heval} and imposing the 
two interface conditions
\[ \bn \cdot (\epsilon_0 \E^{tot}_0 - \epsilon \E^{tot}) = 0, \]
\[ \frac{\nabla \cdot \E_0}{k_0^2} - 
\frac{\nabla \cdot \E}{k^2} = 0. \]
The first is a standard condition for dielectric interfaces and leads
to the scalar equation
\begin{equation}
\epsilon_0 \frac{\partial \phi_0}{\partial n}  - 
\epsilon \frac{\partial \phi}{\partial n} =  f
\label{aug1}
\end{equation}
where
\[ f = \bn \cdot \E^{in} + 
i \omega \bn \cdot (\epsilon_0^2\mu_0 \cdot S_{k_0}[\J_0] 
- \epsilon^2\mu \cdot S_{k}[\J]) 
+ \bn \cdot \nabla \times (\epsilon\mu S_{k}[\K] -
 \epsilon_0\mu_0 S_{k_0}[\K_0]) 
\]
The second interface condition
follows from the fact that $\E$ and $\E_0$ are divergence-free,
with the particular linear combination chosen to yield the equation
\begin{equation}
\phi_0 - \phi =  \frac{\nabla \cdot S_{k}[\J] -
 \nabla \cdot S_{k_0}[\J_0]}{i \omega}.
\label{aug2}
\end{equation}
Similarly, one can derive the two scalar equations for 
$\psi$ and $\psi_0$:
\begin{equation}
\mu_0 \frac{\partial \psi_0}{\partial n}  - 
\mu \frac{\partial \psi}{\partial n} =  g
\label{aug1H}
\end{equation}
where
\[ g = \bn \cdot \bH^{in} + 
i \omega \bn \cdot (\mu_0^2\epsilon_0 \cdot S_{k_0}[\K_0] 
- \mu^2\epsilon \cdot S_{k}[\K]) 
+ \bn \cdot \nabla \times (\epsilon\mu S_{k}[\J] -
 \epsilon_0\mu_0 S_{k_0}[\J_0]),
\]
and
\begin{equation}
\psi_0 - \psi =  \frac{\nabla \cdot S_{k}[\K] -
 \nabla \cdot S_{k_0}[\K_0]}{i \omega}.
\label{aug2H}
\end{equation}

Without entering into details, it is well-known that the interface
problems \eqref{aug1},\eqref{aug2} and
\eqref{aug1H},\eqref{aug2H} are uniquely solvable without dense-mesh
breakdown using a Fredholm equation of the second kind 
\cite{homogeneous_uniqueness_1,rokhlinTRANS}. (Care must be taken in 
computing the right-hand sides for \eqref{aug2} and \eqref{aug2H}.
In particular, catastrophic cancellation can be avoided for small $\omega$ 
by computing the difference asymptotically.)

\begin{remark}
A solver for the 
Maxwell transmission problem based on first solving the 
M\"{u}ller integral equation, followed by solving 
the systems \eqref{aug1},\eqref{aug2} and
\eqref{aug1H},\eqref{aug2H} will be referred to as a 
{\em decoupled charge-current formulation}.
It yields stable solutions for real $\epsilon,\mu$
\cite{MullerAuxDavos}, but can have resonances for lossy materials.
\end{remark}

It is also worth noting that alternative ``charge-current" formulations
have been developed \cite{CCbad,CCIE3,CCIE1,CCIE2}
that make use of 
electric and magnetic charge as additional, distinct unknowns.
Using the representation 
\eqref{Eeval}, \eqref{Heval} with 
$\A,\A_0,{\tilde \bA},{\tilde \bA}_0$ defined in terms of 
the unknowns $\J,\K$ and the relation \eqref{J0def}, 
with $\phi,\phi_0,\psi,\psi_0$ defined in terms of 
$\rho,\rho_M$, one can seek to enforce
\begin{equation}
\begin{aligned}
 {\bf n} \cdot (\epsilon_0 {\bf E}_0 - \epsilon \E ) &= -
  {\bf n} \cdot \, \epsilon_0 {\bf E}^{in} 
    \\
 {\bf n} \cdot (\mu_0 {\bf H}_0 - \mu \bH ) &= -
  {\bf n} \cdot \mu_0 {\bf H}^{in}, \\
 {\bf n} \times ({\bf E}_0 - \bE ) &= -
  {\bf n} \times {\bf E}^{in} 
    \\
 {\bf n} \times ({\bf H}_0 - \bH ) &= -
  {\bf n} \times {\bf H}^{in}.
\end{aligned}
\end{equation}

This avoids low-frequency breakdown and leads to a Fredholm equation of the 
second kind. However, these methods are subject to spurious 
``near resonances."
The precise location of these resonances depends on the 
specific charge-current scheme employed, the material properties and the 
geometry, illustrated in Fig. \ref{Conditionfig} below.
A ``decoupled potential formulation," presented in  
\cite{DPIE_chew}, extends the method of \cite{DPIE} for perfect conductors
to the dielectric case. It, too, is subject to spurious resonances
when the real parts of $\epsilon$ and 
$\mu$ are not both positive, because of the existence of 
``transmission" eigenvalues \cite{homogeneous_uniqueness_10}.

In \cite{gDEBYE2}, an 
integral representation was developed using four scalar densities
supported on the obstacle boundaries. This ``generalized Debye source" 
representation yields a resonance-free
Fredholm integral equations of the second kind, valid for all the material
properties of interest in the present paper. 
The topology of the domain, however, plays a critical
role and the method requires a basis for surface harmonic vector fields.
Finally, there is a substantial literature on ``single source" integral
equations, which involve only one unknown tangential vector field
(see, for example \cite{Glisson,Marx,singlesource_yeung}). 
Unfortunately, the original single source formulations typically 
do not lead to Fredholm equations,  
involve hypersingular operators, and/or are subject to low-frequency 
breakdown. Significant progress, however, has been made
in formulating equations that are immune from low-frequency
breakdown (see, for example, \cite{Colliander,Mautz,Valdes}).
Unfortunately, none of these equations have been shown to be resonance-free
for the full range of passive materials which we consider here.

In the present paper, we describe a new system of decoupled Fredholm 
integral equations
for the electric and magnetic fields that are resonance-free 
for all problems of interest, insensitive
to the genus of the scatterer, and immune from low-frequency breakdown.

\section{Generalized transmission problems}

Our approach to the Maxwell transmission problem is based on the 
analysis of two nonstandard boundary value problems governed
by the vector Helmholtz equation.
More precisely, we introduce the electric and magnetic transmission
problems as follows:

\begin{definition}
By the {\bf vector electric transmission problem} we mean the calculation of
a vector field
\[ 
\begin{array}{ll}
\bE \in C^2(D)\cap C(\overline{D}) & \mbox{for $\xb \in D$}; \\
\bE_0 \in C^2(\mathbb{R}^3 \setminus \overline{D})\cap C(\mathbb{R}^3/D) 
& \mbox{for $\xb \in \mathbb{R}^3\setminus \overline{D}$}, \end{array}
\]
with
\begin{equation}
\begin{aligned}
\Delta \bE+k^2\bE &=0, &\quad  \bx &\in D\\
\Delta \bE_0+k_0^2\bE_0 &=0,  &\quad \bx &\in \mathbb{R}^3/\overline{D}
\end{aligned}
\end{equation}
that satisfies the interface conditions
\begin{equation}\label{cond_vE1}
\bn\times\big(\bE_0-\bE\big)=\mathbf{f} \\
\end{equation}
\begin{equation}\label{cond_vE2}
\bn\times\left(\frac{\nabla\times\bE_0}{\mu_0}-
\frac{\nabla\times\bE}{\mu}\right)=\mathbf{g}
\end{equation}
\begin{equation}\label{cond_sE1}
\nabla\cdot\bE_0-\nabla\cdot\bE=q
\end{equation}
\begin{equation}\label{cond_sE2}
\bn\cdot\left(\epsilon_0 \bE_0- \epsilon \bE\right)=p
\end{equation}
and the radiation condition
\begin{equation}\label{radE}
\begin{array}{ll}
\nabla\times \bE_0(\bx)\times\frac{\bx}{|\bx|}+
\frac{\bx}{|\bx|}\nabla\cdot \bE_0(\bx)-ik_0\bE_0(\bx)=
o\Big(\frac{1}{|\bx|}\Big), \quad &|\bx|\rightarrow \infty\, ,
\end{array}
\end{equation}
where 
$\mathbf{f}\in C^{0,\alpha}_t(Div,\partial D)$, 
$\mathbf{g}\in C^{0,\alpha}_t(\partial D)$ and 
$q,p\in C^{0,\alpha}(\partial D)$.
\end{definition}

\begin{definition}
By the {\bf vector magnetic transmission problem} we mean the calculation of
a vector field
\[ 
\begin{array}{ll}
\bH \in C^2(D)\cap C(\overline{D}) & \mbox{for $\xb \in D$}; \\
\bH_0 \in C^2(\mathbb{R}^3 \setminus \overline{D})\cap C(\mathbb{R}^3/D) 
& \mbox{for $\xb \in \mathbb{R}^3\setminus \overline{D}$}, \end{array}
\]
with
\begin{equation}
\begin{aligned}
\Delta \bH+k^2\bH &=0, &\quad  \bx &\in D\\
\Delta \bH_0+k_0^2\bH_0 &=0,  &\quad \bx &\in \mathbb{R}^3/\overline{D}
\end{aligned}
\end{equation}
that satisfies the interface conditions
\begin{equation}\label{cond_vH1}
\bn\times\left(\bH_0-\bH\right)=\mathbf{f}'
\end{equation}
\begin{equation}\label{cond_vH2}
\bn\times\left(\frac{\nabla\times\bH_0}{\epsilon_0}-
\frac{\nabla\times\bH}{\epsilon}\right)= \mathbf{g}'
\end{equation}
\begin{equation}\label{cond_sH1}
\nabla\cdot\bH_0-\nabla\cdot\bH=q'
\end{equation}
\begin{equation}\label{cond_sH2}
\bn\cdot\left(\mu_0 \bH_0- \mu \bH\right)=p'
\end{equation}
and the radiation condition
\begin{equation}\label{radH}
\begin{array}{ll}
\nabla\times \bH_0(\bx)\times\frac{\bx}{|\bx|}+
\frac{\bx}{|\bx|}\nabla\cdot \bH_0(\bx)-ik\bH_0(\bx)=
o\Big(\frac{1}{|\bx|}\Big), \quad &|\bx|\rightarrow \infty\, ,
\end{array}
\end{equation}
where $\mathbf{f}'\in C^{0,\alpha}_t(Div,\partial D)$, $\mathbf{g}'\in C^{0,\alpha}_t(\partial D)$ and $q',p'\in C^{0,\alpha}(\partial D)$.
\end{definition}

\begin{thm} \label{uniquethm}
The vector electric and magnetic transmission problems have unique solutions
for $\omega \ge0$.
\end{thm}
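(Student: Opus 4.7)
The plan is to reduce the vector electric transmission problem to the Maxwell transmission problem and invoke the uniqueness established in Appendix \ref{maxuniq}. By linearity, it suffices to show that any solution $(\bE,\bE_0)$ with zero data $\mathbf{f}=\mathbf{g}=q=p=0$ vanishes identically. I focus on $\omega>0$; at $\omega=0$ the problem reduces to a pair of classical electrostatic/magnetostatic transmission problems, which are known to be uniquely solvable under \eqref{microwave_cond}.

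\emph{Step 1 (scalar transmission for the divergences).} Set $u=\nabla\cdot\bE$ and $u_0=\nabla\cdot\bE_0$. Taking divergences of the vector Helmholtz equations gives $\Delta u+k^2 u=0$ in $D$ and $\Delta u_0+k_0^2 u_0=0$ in the exterior, while \eqref{cond_sE1} yields $u_0=u$ on $\partial D$. From $\Delta\bE=\nabla u-\nabla\times\nabla\times\bE=-k^2\bE$ one obtains the identity $\nabla u=\nabla\times\nabla\times\bE-k^2\bE$. Taking $\bn\cdot$, and using the fact that $\bn\cdot\nabla\times\bF$ is determined solely by the tangential trace of $\bF$, I express $\partial_n u$ and $\partial_n u_0$ in terms of $\bn\times(\nabla\times\bE)$ and $\bn\cdot\bE$. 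Substituting the zero-data forms of \eqref{cond_vE2} and \eqref{cond_sE2} and invoking the dispersion relations $k^2=\omega^2\epsilon\mu$, $k_0^2=\omega^2\epsilon_0\mu_0$, the $\bn\cdot\bE$ terms cancel exactly and produce
\[
\mu_0\,\partial_n u=\mu\,\partial_n u_0\quad\text{on }\partial D.
\]
Dotting \eqref{radE} with $\bx/|\bx|$ and noting that $(\nabla\times\bE_0\times\bx/|\bx|)\cdot\bx/|\bx|=0$ shows that $u_0$ satisfies the scalar Sommerfeld condition $\partial_r u_0-ik_0 u_0=o(1/|\bx|)$. Hence $(u,u_0)$ solves a homogeneous scalar Helmholtz transmission problem with contrast $\mu/\mu_0$, which is uniquely solvable under \eqref{microwave_cond} by the classical Green's-identity argument; therefore $u\equiv u_0\equiv 0$.

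\emph{Step 2 (passage to Maxwell).} With $\bE$ and $\bE_0$ now divergence-free, define $\bH=(i\omega\mu)^{-1}\nabla\times\bE$ in $D$ and $\bH_0=(i\omega\mu_0)^{-1}\nabla\times\bE_0$ in the exterior. Using $\nabla\cdot\bE=0$ and $\nabla\times\nabla\times\bE=-\Delta\bE=k^2\bE$, a direct computation verifies both Maxwell curl equations in \eqref{meq1} in each region. Condition \eqref{cond_vE1} becomes $\bn\times(\bE_0-\bE)=0$, and dividing \eqref{cond_vE2} by $i\omega$ yields $\bn\times(\bH_0-\bH)=0$, which are precisely the M\"uller interface conditions \eqref{mwaveint1} with zero right-hand sides. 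Substituting the curl representation of $\bH_0$ into the Silver–M\"uller expression in \eqref{radEH} and invoking \eqref{radE} with $\nabla\cdot\bE_0=0$ produces the requisite Silver–M\"uller radiation. By Appendix \ref{maxuniq}, the Maxwell transmission problem with zero incoming field has only the trivial solution, so $\bE\equiv\bE_0\equiv 0$. The vector magnetic transmission problem is handled by the same argument with the roles of $(\epsilon,\mu)$ and $(\bE,\bH)$ interchanged.

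\emph{Main obstacle.} The delicate step is Step 1: deriving the normal-derivative jump $\mu_0\partial_n u=\mu\partial_n u_0$ from the four interface conditions of the definition. These conditions are crafted so that, after applying the Helmholtz identity and the dispersion relation, the unwanted contributions of the normal trace $\bn\cdot\bE$ cancel and the scalar problem for $(u,u_0)$ closes. The remaining ingredients—the Sommerfeld decay of $u_0$, the reconstruction of a Maxwell pair from a divergence-free vector Helmholtz solution, and the translation of the two radiation conditions—are routine once this cancellation is in hand.
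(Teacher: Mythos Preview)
Your argument for $\omega>0$ is essentially the paper's: first show $\nabla\cdot\bE$ and $\nabla\cdot\bE_0$ solve a homogeneous scalar Helmholtz transmission problem and hence vanish (this is the paper's Lemma~\ref{comp1} combined with Lemma~\ref{scalunique}), then feed the divergence-free fields into the Maxwell uniqueness theorem of Appendix~\ref{maxuniq}. One technical point you gloss over: to turn \eqref{cond_vE2} into the normal-derivative jump for $u$, you implicitly take the surface divergence of $\bn\times\nabla\times\bE$, but the regularity assumed in the definition ($\bE\in C^2(D)\cap C(\overline D)$) does not guarantee that $\nabla\times\bE$ extends continuously to $\partial D$, let alone that its tangential trace has a surface divergence. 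The paper handles exactly this by passing to a weak formulation (see Lemma~\ref{comp1}); you should acknowledge the issue or impose stronger a~priori regularity.

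The genuine gap is your treatment of $\omega=0$. Once the divergences vanish (your Step~1 still goes through for the Laplacian), you are left with harmonic, divergence-free fields satisfying the three remaining homogeneous interface conditions \eqref{cond_vE1}, \eqref{cond_vE2}, \eqref{cond_sE2} and decay at infinity. This is \emph{not} ``a pair of classical electrostatic/magnetostatic transmission problems'': the tangential and normal conditions are coupled through different material contrasts $\mu,\epsilon$, and Step~2 is unavailable because you cannot divide by $i\omega$. The paper devotes most of Appendix~\ref{pdeuniq} to this case, using the Stratton--Chu type representation \eqref{rep_laplace} for harmonic vector fields and the spectral bound (eigenvalues in $(-\tfrac12,\tfrac12)$) for the operators $M_0$, $D_0$, $S'_0$ to peel off, in succession, $\bn\times\nabla\times\bE$, then $\nabla\cdot\bE|_{\partial D}$, then $\bn\times\bE$, then $\bn\cdot\bE$. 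Each step is a contradiction argument showing that a nonzero trace would force the ratio $(\mu_0+\mu)/(\mu_0-\mu)$ or $(\epsilon_0+\epsilon)/(\epsilon_0-\epsilon)$ to be an eigenvalue in $[-1,1]$, which is impossible under \eqref{microwave_cond}. You need an argument of this kind; the static case is not a throwaway.
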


\begin{proof}
See Appendix \ref{pdeuniq}.
\end{proof}

\begin{definition}
The layer potentials in \eqref{Skvecdef} and \eqref{Skscalardef}
are referred to as single layer potentials, with vector or scalar
densities, $\J$ and $\rho$, respectively. 
Letting $\xb'$ denote a point on $\partial D$, 
they are continuous across the interface. Their normal derivatives,
denoted by $S_k'$, satisfy the jump conditions \cite{CK1,CK2}:
\begin{equation}
\lim_{{\xb \rightarrow \xb'}^{\pm}}  
\bn(\xb') \cdot \nabla S_k[\J](\xb) = \pm\frac{1}{2} \J(\xb') + S_k'[\J](\xb') 
\label{Skdef1}
\end{equation}
\begin{equation}
 \lim_{{\xb \rightarrow \xb'}^{\pm}}  
\bn(\xb') \cdot \nabla S_k[\rho](\xb) = \pm\frac{1}{2} \rho(\xb') + S_k'[\rho](\xb') 
\label{Skdef2}
\end{equation}
where $S_k'[\J](\xb')$ and $S_k'[\rho](\xb')$ are defined in the principal value sense, 
$\lim_{{\xb \rightarrow \xb'}^-}$ denotes the interior limit ($\xb \in D$) and
$\lim_{{\xb \rightarrow \xb'}^+}$ denotes the exterior limit 
($\xb \in \mathbb{R}^3/\overline{D}$).

The double layer potential is defined by 
\begin{equation}
D_k[\rho](\xb) \equiv \int_{\partial D} \nabla g_k(\xb-\yb) \cdot \bn(\yb) \, 
\rho(\yb) \, ds_{\yb} \, .
\label{Dkdef}
\end{equation}
It is well-known to satisfy the jump condition
\[  \lim_{{\xb \rightarrow \xb'}^{\pm}}  
D_k[\rho](\xb) = \mp\frac{1}{2} \rho(\xb') + D_k[\rho](\xb') 
\]
where $D_k[\rho](\xb')$ is defined in the principal value sense.
Finally, we let $M_k$ denote the operator obtained by taking the limit:
\begin{equation}
\lim_{{\xb \rightarrow \xb'}^\pm} \bn(\xb') \times \nabla \times S_k[\J](\xb) \equiv
\mp \J(\xb') + M_k[\J](\xb') \, ,
\label{Mkdef}
\end{equation}
where $M_k[\J](\xb')$ is defined in the principal value sense.
\end{definition}

\begin{thm}
Let
\begin{equation}
\begin{aligned}\label{eerepresentation}
\bE_0&=\mu_0\nabla\times S_{k_0}[\ba]-\mu_0S_{k_0}[\bn \sigma]+
\mu_0\epsilon_0S_{k_0}[\bb]+\nabla S_{k_0}[\rho] & 
\bx\in \mathbb{R}^3/\overline{D}\\
\bE&=\mu\nabla\times S_{k}[\ba]-\mu S_{k}[\bn \sigma]+
\mu\epsilon S_{k}[\bb]+\nabla S_{k}[\rho]& \bx\in D \, .
\end{aligned}
\end{equation}
Imposing the conditions \eqref{cond_vE1} - \eqref{cond_sE2}
yields the second kind Fredholm equation
{\small
\begin{equation}
\begin{aligned}
\frac{\mu_0+\mu}{2}\ba+\big(\mu_0M_{k_0}-\mu M_{k}\big)[\ba]-
\bn\times\big(\mu_0 S_{k_0}-\mu S_{k}\big)[\bn\sigma] +\ \ &  \\
\bn\times\big( \mu_0\epsilon_0S_{k_0}-\mu\epsilon S_{k}\big)[\bb]+
\bn\times\nabla\big(S_{k_0}-S_k\big)[\rho] &= \mathbf{f}\\
\frac{\mu_0+\mu}{2}\sigma+\big(\mu_0D_{k_0}-\mu D_{k}\big)[\sigma]+
\nabla\cdot\big(\mu_0\epsilon_0 S_{k_0}-\mu\epsilon S_{k}\big)[\bb] \ \ & \\
- \omega^2\big(\mu_0\epsilon_0 S_{k_0}-\mu\epsilon S_{k}\big)[\rho] &= q\\
\frac{\epsilon_0+\epsilon}{2}\bb+ (\epsilon_0M_{k_0}-\epsilon M_{k})[\bb]
+ \bn\times\nabla\times\nabla\times\big(S_{k_0}-S_k\big)[\ba] - \ \ & \\
\bn\times\nabla\times\big(S_{k_0}-S_k\big)[\bn\sigma] &= \mathbf{g}\\
-\frac{\epsilon_0+\epsilon}{2}\rho+
(\epsilon_0S'_{k_0}-\epsilon S'_{k})[\rho]+
\bn\cdot\nabla\times\big(\epsilon_0\mu_0S_{k_0}-\epsilon\mu S_{k}\big)[\ba]
- \ \ & \\
\bn\cdot\big(\epsilon_0\mu_0S_{k_0}-\epsilon\mu S_{k}\big)[\bn\sigma] + 
\bn\cdot\big(\mu_0\epsilon_0^2 S_{k_0}-\mu\epsilon^2 S_{k}\big)[\bb]
&= p \, .
\end{aligned}
\label{veteq}
\end{equation}
}
which is invertible for
$(\mathbf{f},p,\mathbf{g},q)$ in the function space
$$C_t^{0,\alpha}(Div,\partial D)\times C^{0,\alpha}(\partial D)\times 
C_t^{0,\alpha}(\partial D)\times C^{0,\alpha}(\partial D)$$, 
so long as $\epsilon,\mu$ satisfy the conditions \eqref{microwave_cond}.
Given the solution to \eqref{veteq}, the functions given by 
\eqref{eerepresentation} solve the vector electric transmission problem.
\end{thm}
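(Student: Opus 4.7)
The plan is to establish the theorem in three stages: verify that the ansatz \eqref{eerepresentation} automatically solves the Helmholtz equations and satisfies \eqref{radE}; reduce the four interface conditions \eqref{cond_vE1}--\eqref{cond_sE2} to \eqref{veteq} by applying the jump relations \eqref{Skdef1}--\eqref{Mkdef} term by term; and finally prove that the system is of second kind and injective, invoking the Fredholm alternative and Theorem \ref{uniquethm}.

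The first two stages are largely bookkeeping. Each summand of \eqref{eerepresentation} is built from $S_k[\cdot]$, which satisfies $(\Delta+k^2)S_k[\cdot]=0$ off $\partial D$; since $\nabla\times$, $\nabla$, and premultiplication by the fixed vector $\bn$ preserve this identity, both $\bE$ and $\bE_0$ are Helmholtz fields, and the identity $\nabla\times\nabla\times=\nabla\nabla\cdot-\Delta$ reduces \eqref{radE} to the Sommerfeld outgoing condition for $g_{k_0}$. To derive \eqref{veteq}, I would apply $\bn\times(\cdot)$, $\nabla\cdot(\cdot)$, $\bn\times\nabla\times(\cdot)/\mu_{\mathrm{loc}}$, and $\bn\cdot(\epsilon_{\mathrm{loc}}\cdot)$ to \eqref{eerepresentation} and take interior and exterior limits. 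The $M_k$-jump on the $\ba$ term supplies the leading $\tfrac{\mu_0+\mu}{2}\ba$; the identity $\nabla\cdot S_k[\bn\sigma]=D_k[\sigma]$ combined with $\Delta S_k=-k^2 S_k$ and the double-layer jump supplies $\tfrac{\mu_0+\mu}{2}\sigma$; a further $\nabla\times$ feeds the $M$-jump into $\bb$; and the $S_k'$-jump on $\bn\cdot\nabla S_k[\rho]$ produces the leading $\rho$ term. All cross-couplings appear either as differences $(X_{k_0}-X_k)$, whose kernels are smooth because the $1/|\bx-\by|$ singularity cancels, or as isolated weakly singular boundary operators ($S_k,M_k,D_k,S_k'$), and both are compact on the H\"older spaces in question for $\partial D\in C^2$. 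Since under \eqref{microwave_cond} the constants $(\mu_0+\mu)/2$ and $(\epsilon_0+\epsilon)/2$ are nonzero, the system has the form $cI+\mathrm{compact}$ on the stated product space.

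The substantive step is injectivity, which I would handle as follows. Assume $(\ba,\sigma,\bb,\rho)$ solves the homogeneous system. Then the fields $\bE_0,\bE$ defined by \eqref{eerepresentation} solve the homogeneous vector electric transmission problem with $\mathbf{f}=\mathbf{g}=q=p=0$, and by Theorem \ref{uniquethm} they vanish identically in their respective domains. To recover the densities, I would evaluate the same formulas in the opposite regions, obtaining auxiliary Helmholtz fields $\widetilde{\bE}_0$ on $D$ at wavenumber $k_0$ and $\widetilde{\bE}$ on $\mathbb{R}^3\setminus\overline D$ at wavenumber $k$ (radiating). The jumps of each individual layer potential across $\partial D$ translate the vanishing of $\bE_0,\bE$ into an overdetermined set of homogeneous boundary data for $\widetilde{\bE}_0,\widetilde{\bE}$, including both tangential and normal components. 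A Rellich-type argument on the exterior side combined with overdetermined uniqueness for the interior Helmholtz problem under \eqref{microwave_cond} forces these auxiliary fields to vanish as well. Once the Cauchy data of every individual layer potential is zero on both sides of $\partial D$, the jump formulas directly yield $\ba=\bb=0$ and $\sigma=\rho=0$.

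The main obstacle is this final uniqueness argument: when $\Re(\epsilon)$ or $\Re(\mu)$ is negative, standard energy methods do not yield definite signs, so the vanishing of the auxiliary fields must be extracted from the overdetermined nature of the boundary data rather than from coercivity. Condition \eqref{microwave_cond} is precisely what guarantees $\Im(k^2)\ge 0$ with strictness off the positive real axis, and on $k\in\bbR^+$ the simultaneous vanishing of tangential and normal traces rules out any interior Maxwell eigenfunction, closing the argument. Given injectivity, the Fredholm alternative yields invertibility, and the representation \eqref{eerepresentation} then delivers the unique solution to the vector electric transmission problem, completing the proof.
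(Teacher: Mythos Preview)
Your outline tracks the paper's strategy closely, but two steps need repair.

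\textbf{Fredholm structure.} The assertion that the system is ``$cI+\text{compact}$'' is not correct. Operators such as $\nabla\cdot\big(\mu_0\epsilon_0 S_{k_0}-\mu\epsilon S_{k}\big)[\bb]$ and $\bn\cdot\nabla\times\big(\epsilon_0\mu_0S_{k_0}-\epsilon\mu S_{k}\big)[\ba]$ carry a first derivative of a single layer whose leading $1/|\bx-\by|$ singularity does \emph{not} cancel (the coefficients $\mu_0\epsilon_0$ and $\mu\epsilon$ differ); their principal symbols are nonzero, so on the stated H\"older spaces they are bounded of order zero but not compact. The paper handles this by writing the system as $(\mathbf{B}+\mathbf{K})\mathbf{x}=\mathbf{y}$ with $\mathbf{K}$ compact and $\mathbf{B}$ block--triangular with the invertible diagonal $\tfrac{\mu_0+\mu}{2}I,\ \tfrac{\mu_0+\mu}{2}I,\ \tfrac{\epsilon_0+\epsilon}{2}I,\ -\tfrac{\epsilon_0+\epsilon}{2}I$; one inverts $\mathbf{B}$ explicitly and applies the Fredholm alternative to $I+\mathbf{B}^{-1}\mathbf{K}$.

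\textbf{Injectivity.} Your auxiliary--field step does not close as written. The jump relations, together with $\bE_0=\bE=0$, do \emph{not} produce homogeneous Cauchy data for $\widetilde{\bE}_0$ and $\widetilde{\bE}$ separately; they express each trace of each auxiliary field as a nonzero multiple of one of the densities $\ba,\bb,\sigma,\rho$. So there is nothing ``overdetermined and homogeneous'' to feed into a Rellich argument for either field alone. The paper's device is to scale the auxiliary fields as $\widetilde{\bE}_0:=-\tfrac{1}{\mu_0}(\text{exterior formula in }D)$ and $\widetilde{\bE}:=\tfrac{1}{\mu}(\text{interior formula in }\mathbb{R}^3\setminus\overline{D})$. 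With precisely these weights the four jump identities combine to say that $(\widetilde{\bE}_0,\widetilde{\bE})$ satisfies the homogeneous vector \emph{magnetic} transmission problem with interior and exterior material parameters interchanged. Theorem~\ref{uniquethm} (applied a second time, in its dual form) then gives $\widetilde{\bE}_0=\widetilde{\bE}=0$, and only now do the jump relations force $\ba=\bb=0$, $\sigma=\rho=0$. Recognizing this dual transmission structure---and the exact scalings that produce it---is the missing idea in your sketch.
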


\begin{proof}
See Appendix \ref{pdeexist}.
\end{proof}

The analogous result follows trivially for the vector magnetic transmission problem.

\begin{thm}
Let
\begin{equation}
\begin{aligned}\label{Hrepresentation}
\bH_0&=\epsilon_0\nabla\times S_{k_0}(\ba)
-\epsilon_0S_{k_0}(\bn \sigma )+\mu_0\epsilon_0S_{k_0}(\bb)+\nabla S_{k_0}(\rho) & \bx\in \mathbb{R}^3/\overline{D}\\
\bH&=\epsilon\nabla\times S_{k}(\ba)-\epsilon S_{k}(\bn \sigma )+\mu\epsilon S_{k}(\bb)+\nabla S_{k}(\rho)& \bx\in D \, .
\end{aligned}
\end{equation}
Imposing the conditions \eqref{cond_vH1} - \eqref{cond_sH2}
yields a second kind Fredholm equation, identical to \eqref{veteq}
with $\{ \epsilon,\mu \}$ and 
$\{ \epsilon_0,\mu_0 \}$ interchanged.
Given the solution to this dual integral equation,
the functions given by 
\eqref{Hrepresentation} solve the vector magnetic transmission problem.
\end{thm}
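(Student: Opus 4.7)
My plan is to reduce this statement to the preceding theorem by exploiting the invariance of the entire formulation under the substitution
\[ \epsilon \longleftrightarrow \mu, \qquad \epsilon_0 \longleftrightarrow \mu_0. \]
Under this swap the wavenumbers $k=\omega\sqrt{\epsilon\mu}$ and $k_0=\omega\sqrt{\epsilon_0\mu_0}$ are preserved, so each of the layer-potential operators $S_k, S_{k_0}, M_k, M_{k_0}, D_k, D_{k_0}, S'_k, S'_{k_0}$ is unchanged. The admissibility hypothesis \eqref{microwave_cond} is symmetric in $\epsilon$ and $\mu$ and therefore also invariant. Inspecting the two representations term by term, the ansatz \eqref{eerepresentation} maps to \eqref{Hrepresentation} under this substitution, and the four interface conditions \eqref{cond_vE1}--\eqref{cond_sE2} map to \eqref{cond_vH1}--\eqref{cond_sH2}. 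Hence the derivation of \eqref{veteq} applies verbatim with the data relabeled.

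Operationally, I would first substitute \eqref{Hrepresentation} into each of \eqref{cond_vH1}--\eqref{cond_sH2}, using the jump relations \eqref{Skdef1}--\eqref{Mkdef}, which involve no material parameters and therefore act identically on both sides of the swap. I would then read off the four resulting equations and confirm line by line that they coincide with \eqref{veteq} after the material interchange. Invertibility in the stated H\"older function space is inherited directly from the previous theorem, whose hypotheses are preserved by the swap. Finally, to verify that the representation genuinely solves the vector magnetic transmission problem, I would check that each term in \eqref{Hrepresentation} is componentwise a Helmholtz solution at the appropriate wavenumber on its side of $\partial D$ (which is automatic, since $\ba,\sigma,\bb,\rho$ are surface densities acted on by $S_{k_0}, S_k$ and their first derivatives), and that $\bH_0$ satisfies \eqref{radH} via the standard outgoing asymptotics of $g_{k_0}$.

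The obstacle I anticipate is purely clerical rather than analytic. The system \eqref{veteq} contains several asymmetric products of material constants, for example $\mu_0\epsilon_0^2$ and $\mu\epsilon^2$ in the last equation, so care is needed to apply the swap inside every operator coefficient rather than to isolated factors. Symmetric combinations such as $\mu_0\epsilon_0 = \epsilon_0\mu_0$ and the wavenumbers $k_0^2 = \omega^2\epsilon_0\mu_0$ are fixed, while the leading factors $\mu_0, \mu$ multiplying $M_{k_0}, M_k$ in the first row exchange roles with the $\epsilon_0, \epsilon$ leading factors in the third row, and the right-hand-side data must be relabeled $(\mathbf{f},p,\mathbf{g},q)\mapsto (\mathbf{f}',p',\mathbf{g}',q')$. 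Once this bookkeeping is performed consistently, no additional analysis beyond the previous theorem is required.
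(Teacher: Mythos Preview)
Your proposal is correct and matches the paper's own treatment: the paper simply remarks that ``the analogous result follows trivially for the vector magnetic transmission problem'' and states the theorem without a separate proof, relying on precisely the $\epsilon\leftrightarrow\mu$, $\epsilon_0\leftrightarrow\mu_0$ symmetry you describe. Your explicit identification of which quantities are fixed (the wavenumbers, the layer-potential operators, the symmetric products $\epsilon_0\mu_0$, $\epsilon\mu$) and which are swapped is exactly the bookkeeping the paper leaves implicit.
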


\begin{thm}
Let $\bE^{in}, \bH^{in}$ denote an incoming electromagnetic field,
let $\bE_0,\bE$ denote the solution of the vector 
electric transmission problem with right hand side
\begin{equation}
\begin{aligned}
\mathbf{f}&=-\bn\times\bE^{in}\\
\mathbf{g}&=-i\omega\bn\times\bH^{in}\\
q&=0\\
p&=-\bn\cdot\epsilon_0\bE^{in} \, ,
\end{aligned}
\end{equation}
and let $\bH_0,\bH$ denote the solution of the vector
magnetic transmission problem with right hand side
\begin{equation}
\begin{aligned}
\mathbf{f}'&=-\bn\times\bH^{in}\\
\mathbf{g}'&=-i\omega\bn\times\bE^{in}\\
q'&=0\\
p'&=-\bn\cdot\mu_0\bH^{in} \, .
\end{aligned}
\end{equation}
Then, the fields $\bE_0,\bE,\bH_0,\bH$ satisfy the Maxwell equations
and solve the Maxwell transmission problem.
\end{thm}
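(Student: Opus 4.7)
The plan is to exploit the uniqueness of the two vector transmission problems (Theorem~\ref{uniquethm}) to couple the independently computed solutions $(\bE,\bE_0)$ and $(\bH,\bH_0)$. Specifically, I set $\tilde\bH := (\nabla\times\bE)/(i\omega\mu)$ in $D$ and $\tilde\bH_0 := (\nabla\times\bE_0)/(i\omega\mu_0)$ in the exterior, then show that $(\tilde\bH,\tilde\bH_0)$ solves the vector magnetic transmission problem with the very right-hand side $(\mathbf{f}',\mathbf{g}',q',p')$ prescribed in the theorem. Uniqueness will force $\tilde\bH=\bH$ and $\tilde\bH_0=\bH_0$, which is precisely the Maxwell relation $\nabla\times\bE=i\omega\mu\bH$ (and its exterior analogue). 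The companion relation $\nabla\times\bH=-i\omega\epsilon\bE$ is then an algebraic consequence, the Maxwell tangential interface conditions \eqref{mwaveint1} are built into $\mathbf{f}$ and $\mathbf{f}'$, and the Silver--M\"uller condition \eqref{radEH} will follow from the vector Helmholtz radiation conditions once divergence-freeness is in hand.

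The crucial preliminary step is to prove $\nabla\cdot\bE\equiv 0$ in $D$ and $\nabla\cdot\bE_0\equiv 0$ in $\mathbb{R}^3\setminus\overline D$ (and symmetrically for $\bH,\bH_0$). Setting $\phi:=\nabla\cdot\bE$, $\phi_0:=\nabla\cdot\bE_0$, the divergence of the vector Helmholtz equations yields $\Delta\phi+k^2\phi=0$ and $\Delta\phi_0+k_0^2\phi_0=0$, with $\phi_0$ inheriting a scalar radiation condition from \eqref{radE}. The condition \eqref{cond_sE1} with $q=0$ gives Dirichlet continuity $\phi_0=\phi$ on $\partial D$. The matching Neumann jump is obtained by taking the surface divergence of \eqref{cond_vE2}, using the identity $\nabla_S\cdot(\bn\times\bF)=-\bn\cdot(\nabla\times\bF)$ together with $\nabla\times\nabla\times=\nabla(\nabla\cdot)-\Delta$ and the vector Helmholtz equations; the result is an identity relating $\bn\cdot\nabla\phi_0/\mu_0-\bn\cdot\nabla\phi/\mu$, $\omega^2 p$, and $\nabla_S\cdot\mathbf{g}$. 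For the specific data $\mathbf{g}=-i\omega\,\bn\times\bH^{in}$ and $p=-\epsilon_0\bn\cdot\bE^{in}$, the incoming field's own Maxwell equation $\nabla\times\bH^{in}=-i\omega\epsilon_0\bE^{in}$ makes the contributions from $\nabla_S\cdot\mathbf{g}$ and $-\omega^2 p$ cancel, so the Neumann jump vanishes. Uniqueness for the scalar Helmholtz transmission problem (valid under \eqref{microwave_cond}) then forces $\phi\equiv 0$, $\phi_0\equiv 0$.

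With divergence-freeness established, verifying that $(\tilde\bH,\tilde\bH_0)$ satisfies the vector magnetic transmission problem is essentially mechanical. The vector Helmholtz equations transfer by commuting $\nabla\times$ with $\Delta$. Condition \eqref{cond_vH1} is exactly \eqref{cond_vE2} divided by $i\omega$. Condition \eqref{cond_sH1} with $q'=0$ is immediate from $\nabla\cdot(\nabla\times\,\cdot\,)=0$. Condition \eqref{cond_sH2} follows by applying $\bn\cdot(\nabla\times\,\cdot\,)$ to \eqref{cond_vE1}, invoking $\bn\cdot\nabla\times\bF=-\nabla_S\cdot(\bn\times\bF)$ and $\nabla\times\bE^{in}=i\omega\mu_0\bH^{in}$. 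For the main condition \eqref{cond_vH2}, the curl--curl identity combined with $\nabla\cdot\bE=0$ reduces $(\nabla\times\tilde\bH)/\epsilon$ to $-i\omega\bE$ (and analogously outside), so the tangential jump equals $-i\omega\,\bn\times(\bE_0-\bE)$, which by \eqref{cond_vE1} matches $\mathbf{g}'$. The radiation condition \eqref{radH} for $\tilde\bH_0$ follows from \eqref{radE}, $\nabla\cdot\bE_0=0$, and $\nabla\times\nabla\times\bE_0=k_0^2\bE_0$. Theorem~\ref{uniquethm} then identifies $\tilde\bH$ with $\bH$ and $\tilde\bH_0$ with $\bH_0$, so the first Maxwell equation holds; the second follows directly from $\nabla\times\tilde\bH=(\nabla(\nabla\cdot\bE)+k^2\bE)/(i\omega\mu)=k^2\bE/(i\omega\mu)=-i\omega\epsilon\bE$. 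The interface conditions \eqref{mwaveint1} are now exactly \eqref{cond_vE1} and \eqref{cond_vH1}, and the Silver--M\"uller radiation condition \eqref{radEH} reduces via $\bH_0=(\nabla\times\bE_0)/(i\omega\mu_0)$ and $\nabla\cdot\bE_0=0$ to \eqref{radE}.

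The hard part will be the divergence-free step of the second paragraph: this is the only place where the Maxwell compatibility of the incoming field $(\bE^{in},\bH^{in})$ is used essentially, and seeing that the particular linear combination of interface data in \eqref{cond_vE1}--\eqref{cond_sE2} produces exactly the Neumann datum needed to close the scalar transmission argument requires a careful accounting of operators and signs. Once that cancellation is verified, the remainder of the proof is routine vector calculus plus two applications of uniqueness for the decoupled vector transmission problems.
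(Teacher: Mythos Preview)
Your argument is correct and takes a genuinely different route from the paper. The paper's proof is extremely brief: it simply observes that the (independently known to exist) scattered Maxwell fields satisfy the vector electric and magnetic transmission problems with the stated data ``by inspection,'' and then invokes the uniqueness of those problems (Theorem~\ref{uniquethm}) to conclude that the transmission solutions coincide with the Maxwell fields. In other words, the paper works \emph{backwards} from the Maxwell solution, relying on its existence (via M\"uller's equation or classical results) as an external input.

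You instead work \emph{forwards} from the transmission solutions, without assuming a Maxwell solution exists. Your key step---showing $\nabla\cdot\bE$ and $\nabla\cdot\bE_0$ vanish by checking that they satisfy the homogeneous scalar electric transmission problem---parallels the paper's Lemma~\ref{comp1}, but with the additional observation that the particular incoming-field data make the Neumann jump cancel. You then bootstrap: $\tilde\bH:=(\nabla\times\bE)/(i\omega\mu)$ is shown to satisfy the magnetic transmission problem, and uniqueness identifies it with $\bH$. This buys you a self-contained existence proof for the Maxwell transmission problem as a by-product, whereas the paper's argument presupposes it. Two minor caveats: your division by $i\omega$ tacitly assumes $\omega>0$ (harmless here, since the Maxwell transmission problem is only posed for $\omega>0$), and the surface-divergence manipulation in your Neumann-jump computation should, strictly speaking, be carried out in the weak form used in Lemma~\ref{comp1}, since the regularity hypotheses do not guarantee that $\nabla_S\cdot$ commutes with the boundary limit.
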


\begin{proof}
The result follows from Theorem \ref{uniquethm} and the fact that
the desired scattered fields satisfy the vector electric and magnetic
transmission problems by inspection. 
The boundary data $\mathbf{f},\mathbf{g},q,p$  and 
$\mathbf{f}',\mathbf{g}',q',p'$
satisfy the required regularity conditions assuming that the incoming
fields
$\bE^{in}$ and $\bH^{in}$ are induced by exterior sources away from 
$\partial D$.
\end{proof}

\begin{thm} \label{stabthm}
The solution of the vector electric and magnetic transmission problems
satisfy the following stability properties uniformly on the interval 
$\omega\in [0,\omega_{\max}]$.  
\begin{equation}\label{ineq_1}
\|\bE_0\|_{\alpha,\mathbb{R}^3\backslash D}\le K(\partial D,k_{\max})
\Big(\|\mathbf{f}\|_{\alpha,\partial D}+\|\mathbf{g}\|_{\alpha,\partial D}+\|q\|_{\alpha,\partial D}+\|p\|_{\alpha,\partial D}\Big)
\end{equation}
\begin{equation}\label{ineq_2}
\|\bH_0\|_{\alpha,\mathbb{R}^3\backslash D}\le K(\partial D,k_{\max})
\Big(\|\mathbf{f}\|_{\alpha,\partial D}+\|\mathbf{g}\|_{\alpha,\partial D}+\|q\|_{\alpha,\partial D}+\|p\|_{\alpha,\partial D}\Big)
\end{equation}
For Maxwellian incoming fields, we have 
\begin{equation}\label{ineq_3}
\|\bE_0\|_{\alpha,\mathbb{R}^3\backslash D}\le K(\partial D,k_{\max})
\Big(\|\bE^{in}\|_{\alpha,\partial D}+\|\bH^{in}\|_{\alpha,\partial D}\Big)
\end{equation}
\begin{equation}\label{ineq_4}
\|\bH_0\|_{\alpha,\mathbb{R}^3\backslash D}\le K(\partial D,k_{\max})
\Big(\|\bE^{in}\|_{\alpha,\partial D}+\|\bH^{in}\|_{\alpha,\partial D}\Big).
\end{equation}
\end{thm}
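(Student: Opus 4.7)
The plan is to write the integral system \eqref{veteq} abstractly as $T(\omega)\,u = F$, where $u=(\ba,\sigma,\bb,\rho)$ and $F=(\mathbf{f},q,\mathbf{g},p)$ live in the product H\"older space
$$X \;=\; C_t^{0,\alpha}(Div,\partial D)\times C^{0,\alpha}(\partial D)\times C_t^{0,\alpha}(\partial D)\times C^{0,\alpha}(\partial D),$$
and then prove two things: (a) there is a constant $K_1=K_1(\partial D,k_{\max})$ so that $\|T(\omega)^{-1}\|_{\mathcal{L}(X)}\le K_1$ for all $\omega\in[0,\omega_{\max}]$; and (b) the representation \eqref{eerepresentation} yields a constant $K_2=K_2(\partial D,k_{\max})$ with $\|\bE_0\|_{\alpha,\mathbb{R}^3\setminus D}\le K_2\|u\|_X$ uniformly in $\omega$. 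Inequality \eqref{ineq_1} is then $K_1 K_2$ times the right-hand side norm, and \eqref{ineq_2} follows by the same argument applied to the dual system for the magnetic problem.

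For step (a), the preceding theorem already supplies invertibility of $T(\omega)$ at each fixed $\omega$. To upgrade this to a uniform bound, I would observe that every kernel entering $T(\omega)$ is built from $g_k(\bx-\by)=e^{ik\|\bx-\by\|}/\|\bx-\by\|$ and its derivatives, with $k(\omega)=\omega\sqrt{\epsilon\mu}$ and $k_0(\omega)=\omega\sqrt{\epsilon_0\mu_0}$ analytic in $\omega$. Because $g_0=1/\|\cdot\|$ is the well-defined Laplace kernel, the difference operators $S_{k_0}-S_k$, $M_{k_0}-M_k$, $D_{k_0}-D_k$, $S'_{k_0}-S'_k$ appearing in \eqref{veteq} depend continuously on $\omega\in[0,\omega_{\max}]$ in the operator norm on $X$. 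Hence $\omega\mapsto T(\omega)$ is continuous into $\mathcal{L}(X)$; a Neumann-series perturbation argument makes $\omega\mapsto T(\omega)^{-1}$ continuous wherever it is defined. Since each $T(\omega)^{-1}$ exists on $[0,\omega_{\max}]$ and this interval is compact, $\omega\mapsto\|T(\omega)^{-1}\|$ attains a finite maximum, giving $K_1$.

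For step (b), formula \eqref{eerepresentation} expresses $\bE_0$ as
$\mu_0\nabla\times S_{k_0}[\ba]-\mu_0 S_{k_0}[\bn\sigma]+\mu_0\epsilon_0 S_{k_0}[\bb]+\nabla S_{k_0}[\rho]$, a sum of classical single-layer, vector single-layer, curl-of-single-layer and gradient-of-single-layer potentials. Each is a bounded map from $C^{0,\alpha}(\partial D)$ (or its tangential/divergence analogue) into $C^{0,\alpha}$ on every closed set separated from $\partial D$, with the radiation decay at infinity inherited from $g_{k_0}$; the bounding constants are continuous in $k_0$ on $[0,k_{\max}]$, hence bounded by some $K_2$. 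Chaining with (a) yields \eqref{ineq_1}, and \eqref{ineq_2} is identical. To derive \eqref{ineq_3}--\eqref{ineq_4}, I substitute the Maxwellian data $\mathbf{f}=-\bn\times\bE^{in}$, $\mathbf{g}=-i\omega\,\bn\times\bH^{in}$, $q=0$, $p=-\epsilon_0\bn\cdot\bE^{in}$ (and primed counterparts) into \eqref{ineq_1}--\eqref{ineq_2}; simple trace inequalities for H\"older functions together with $|\omega|\le\omega_{\max}$ give $\|F\|_X\le C(\|\bE^{in}\|_\alpha+\|\bH^{in}\|_\alpha)$.

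The main obstacle is the uniform invertibility at the endpoint $\omega=0$. In classical M\"uller or charge-current formulations $T(0)$ is singular, producing exactly the low-frequency/dense-mesh breakdown discussed in the introduction, and one loses uniformity as $\omega\to 0$. The decoupled formulation was designed so that at $\omega=0$ the operator $T(0)$ decouples into well-posed static scalar and vector transmission problems, so invertibility at $\omega=0$ is not a degenerate limit but a genuine case of the preceding existence theorem. Given that, continuity of $T(\omega)^{-1}$ plus compactness of $[0,\omega_{\max}]$ closes the argument cleanly; the remaining details are the standard uniform H\"older-norm bounds on $S_k$, $\nabla S_k$, $\nabla\times S_k$ as $k$ varies over the compact set $[0,k_{\max}]$.
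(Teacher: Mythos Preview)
Your argument is correct and follows the same overall strategy as the paper: establish a uniform bound on the inverse of the integral operator over the compact interval $[0,\omega_{\max}]$, then invoke the mapping properties of the layer potentials in \eqref{eerepresentation}. Two technical points differ from the paper's proof, however, and are worth noting.

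First, for the uniform invertibility the paper does not argue via norm-continuity of $\omega\mapsto T(\omega)$ and a Neumann series; instead it writes $T(\omega)=\mathbf{B}(\omega)+\mathbf{K}(\omega)$ with $\mathbf{B}$ boundedly invertible and $\{\mathbf{K}(\omega)\}_{\omega\in[0,\omega_{\max}]}$ a \emph{collectively compact} family, and appeals to the standard fact that pointwise invertibility plus collective compactness gives a uniform bound on the inverses. Your continuity-plus-compact-interval argument is a legitimate alternative, since the kernels depend analytically on $k$; the collective-compactness route is slightly more robust in that it requires only strong (not norm) continuity of the family.

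Second, and more substantively, the paper does \emph{not} run the Fredholm argument on your space $X=C_t^{0,\alpha}(Div,\partial D)\times\cdots$. It passes to the weaker product space with $C_t^{0,\alpha}(\partial D)$ in the first slot (no $Div$ norm), using that $C_t^{0,\alpha}(Div,\partial D)$ is dense there and that the nullspace is therefore unchanged. This is precisely what is needed to get \eqref{ineq_1} as stated, with only $\|\mathbf{f}\|_{\alpha,\partial D}$ on the right-hand side. Your version of step~(a), carried out on $X$, would produce instead an estimate involving $\|\mathbf{f}\|_{\alpha}+\|\nabla_s\cdot\mathbf{f}\|_{\alpha}$, which is weaker than \eqref{ineq_1}. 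For the Maxwell data this does no harm---since $\nabla_s\cdot(\bn\times\bE^{in})=-i\omega\mu_0\,\bn\cdot\bH^{in}$ is controlled by $\omega_{\max}\|\bH^{in}\|_{\alpha}$, your argument still delivers \eqref{ineq_3}--\eqref{ineq_4}---but to match \eqref{ineq_1}--\eqref{ineq_2} exactly you would need the same change of space.
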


\begin{proof}
See Appendix \ref{pdestab}.
\end{proof}

As a consequence of the preceding theorems, one can solve the Maxwell transmission 
problem replacing it with the vector electric and magnetic transmission problems.
This permits evaluation of the fields all the way to $\omega = 0$ without low 
frequency breakdown and without regard to the genus of the surface.

\section{High Frequency Scaling}

While the representation 
\eqref{eerepresentation} is sufficient for proving existence and uniqueness,
it is not well scaled at high frequencies. For the sake of simplicity,
we will asume that the scatterer has dimensions on the order of unity,
so that $\omega$ itself is a measure of size in terms of wavelength.
Following Kress \cite{Kress_sphere} and our earlier work \cite{ourSphere}, 
when $\omega > 1$, we suggest that the  representation for the electromagnetic
field be modified as follows:

\begin{equation}
\begin{aligned}\label{representation}
\bE_0&=\mu_0\nabla\times S_{k_0}(\ba)-\omega\mu_0S_{k_0}(\bn \sigma )+\omega\mu_0\epsilon_0S_{k_0}(\bb)+\nabla S_{k_0}(\rho) & \bx\in \mathbb{R}^3/\overline{D}\\
\bE&=\mu\nabla\times S_{k}(\ba)-\omega\mu S_{k}(\bn \sigma )+\omega\mu\epsilon S_{k}(\bb)+\nabla S_{k}(\rho)& \bx\in D
\end{aligned}
\end{equation}

We also rescale the jump conditions 
\eqref{cond_vE1}-\eqref{cond_sE2} as follows:
\begin{equation*}  
\bn\times\big(\bE_0-\bE\big)=\mathbf{f}=-\bn\times\bE^{\In}\\
\end{equation*}
\begin{equation*}  
\frac{1}{\omega}\bn\times\Big(\frac{\nabla\times\bE_0}{\mu_0}-\frac{\nabla\times\bE}{\mu}\Big)=\frac{1}{\omega}\mathbf{g}=-\frac{1}{\omega}\bn\times\frac{\nabla\times\bE^{\In}}{\mu_0}
\end{equation*}
\begin{equation*}  
\frac{1}{\omega}\big(\nabla\cdot\bE_0-\nabla\cdot\bE=q\big)=-\frac{1}{\omega}\nabla\cdot\bE^{\In}
\end{equation*}
\begin{equation*}  
\bn\cdot\big(\epsilon_0 \bE_0- \epsilon \bE\big)=p=-\bn\cdot\epsilon_0\bE^{\In}
\end{equation*}

This results in the following (rescaled) 
system of equations:
\begin{equation}
\mathbf{(B_s+K_s)x=y_s},
\end{equation}
where
\begin{equation}\label{Op_Bs}
\mathbf{B_s}:=\left( \begin{array}{cccc}
I_1 & \omega B_{12} & \omega B_{13} & 0 \\
0 & I_2 & B_{23} & 0 \\
0 & 0 & I_3 & 0 \\
B_{41} & 0 & 0 & I_4 
\end{array} \right)
\end{equation}

\begin{equation}\label{Op_Ks}
\mathbf{K_s}:=\left( \begin{array}{cccc}
K_{11} & 0 & 0 & K_{14} \\
0 & K_{22} & 0 & \frac{K_{24}}{\omega } \\
\frac{K_{31}}{\omega } & K_{32} & K_{33} & 0 \\
0 & \omega K_{42} & \omega K_{43} & K_{44}
\end{array} \right)
\end{equation}
\begin{equation}
\mathbf{x}:=\left( \begin{array}{c}
\ba \\  \sigma \\ \bb \\ \rho
\end{array} \right); \ \ \mathbf{y_s}:=
\left( \begin{array}{c}
\mathbf{f} \\  \frac{q}{\omega } \\ \frac{\mathbf{g}}{\omega } \\ p
\end{array} \right)
\end{equation}

\section{Condition number analysis}
\label{cond_section}

To illustrate the behavior of the various methods discussed above,
we implemented all of the integral operators for a spherical
scatterer, expanding
each surface current in vector spherical harmonics and each 
charge density in scalar spherical harmonics, as in \cite{ourSphere}.
From this it is straightforward to compute the condition number of
the various linear systems of interest for any 
$\epsilon$ and $\mu$, where we assume the exterior
permeability and permittivity are normalized to $\epsilon_0=1,\mu_0=1$. 

In our first experiments, we plot
the condition number of our decoupled field integral equation (DFIE),
the decoupled charge-current formulation (based on the M\"{u}ller
integral equation), and a standard charge-current formulation as a function 
of angular frequency $\omega$ (Fig. \ref{Conditionfig}).
The precise charge-current formulation that we use is obtained from
the standard representation for the fields in terms of potentials and 
antipotentials (\ref{Eeval}-\ref{Heval}), but imposing the 
continuity condition \eqref{rhodef}
in the form
\[ \nabla\cdot S_{k_0} (\J) -i\omega S_{k_0}(\rho)=0.  \]
With
\begin{eqnarray*}
\E_0 = -\nabla\times S_{k_0}(\K)+i\omega \mu_0 S_{k_0}(\J)-\nabla S_{k_0}(\rho),
\nonumber
\end{eqnarray*}
this is accomplished in a weak sense by simply replacing
the normal component of $\E_0$ with

{\small 
\begin{eqnarray*}
\bn\cdot\E_0 = -\bn\cdot\nabla\times S_{k_0}(\K)+i\omega \mu_0 \bn\cdot S_{k_0}(\J)-\bn\cdot \nabla S_{k_0}(\rho)+\eta\left(\nabla\cdot S_{k_0} (\J) -i\omega S_{k_0}(\rho)\right),
\end{eqnarray*}
}
where $\eta$ is an arbitrary parameter that defines a family of numerical
methods. 

Even for a fixed geometry, the space of possible integral equations
is high-dimensional, depending on $\omega$, $\epsilon$, and $\mu$ 
(and $\eta$ for the charge-current formulation).
Thus, Figs. \ref{Conditionfig} and \ref{Conditionfig2} 
only shows a sample of the possible behaviors
of the various methods.
In each plot, we fix $\epsilon$ and $\mu$ as indicated 
and scan the frequency $\omega$ in the range $[0,10]$. 

In Fig. \ref{Conditionfig}, 
the left column corresponds to settng $\epsilon = -2+i$, $\mu = -1 + i$,
with $\eta = 0,1,i$ in the three rows, respectively. 
Near resonances only appear to occur for $\eta = 0$. 
However, note in Fig. \ref{Conditionfig2} (left), that for 
$\epsilon = -1+i$, $\mu = 1$, it is the charge-current formulation with 
$\eta = 1$ that is badly behaved (left).
For the right-hand side of Fig. \ref{Conditionfig}, 
we searched for values of 
$\epsilon$, $\mu$ where the decoupled charge-current blows up.
For $\epsilon = 1+i$, $\mu = 1$, the charge-current formulation with 
$\eta = 0$ appears to behave well (Fig. \ref{Conditionfig2}, center).
Note, finally, that
in the absence of dissipation (when $\epsilon$ and $\mu$ are
real), the decoupled charge-current formulation appears to be the best behaved
(Fig. \ref{Conditionfig2}, right). 

\begin{figure}[H]
\begin{center}
\includegraphics[width=5.6in]{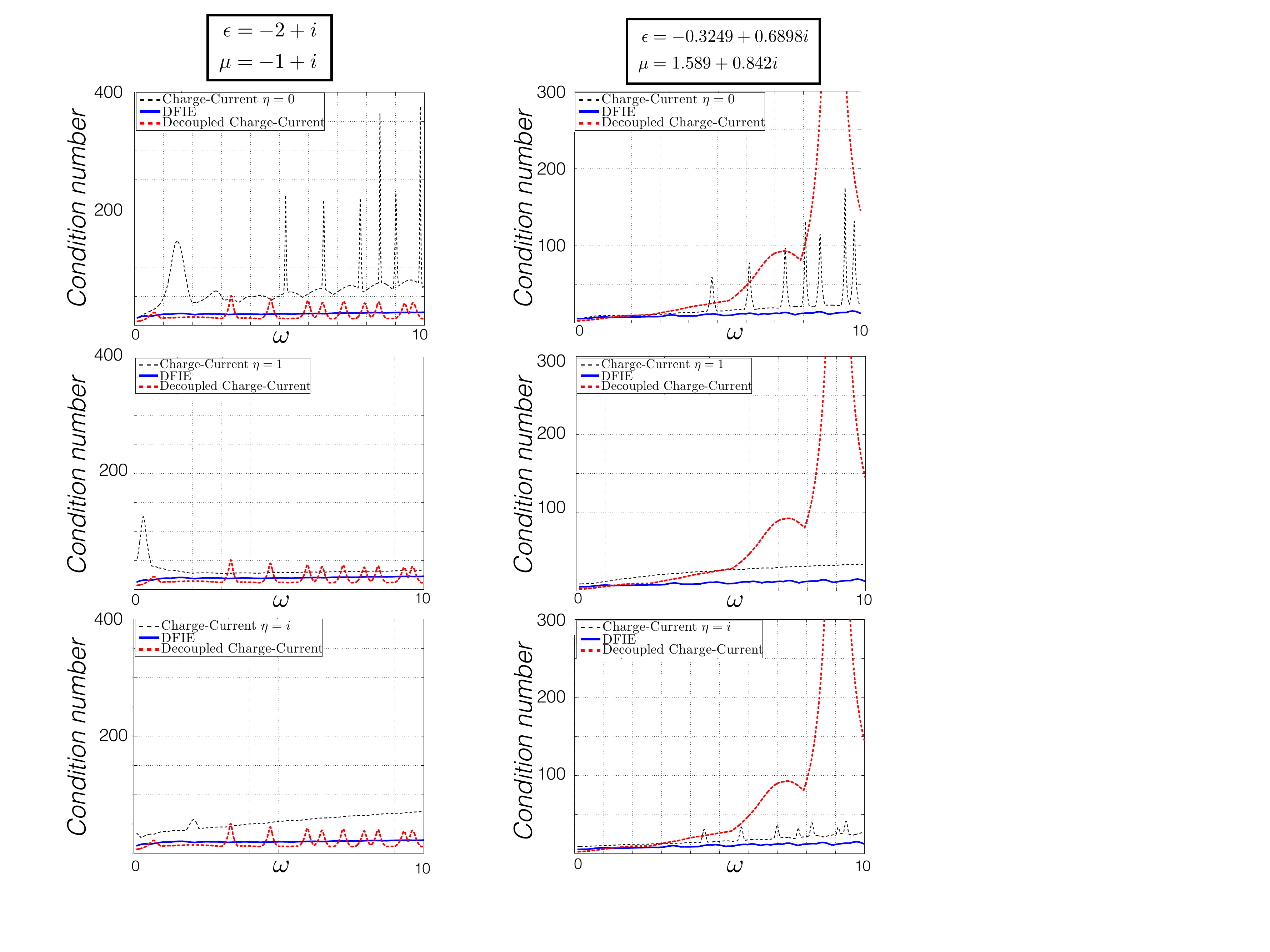}%
\end{center}
\caption{Condition number of the (scaled) DFIE, the
charge-current formulation with $\eta = 0,1,i$, 
and the decoupled charge-current (M\"{u}ller)
formulation for a homogeneous dielectric sphere of unit radius as a function
of frequency $\omega$ for $\epsilon= -2+i,\mu = -1+i$ and for 
$\epsilon= -0.3249 + 0.6898i,\mu = 1.589+0.842i$.
\label{Conditionfig}}
\end{figure}

\begin{figure}[H]
\begin{center}
\includegraphics[width=5.6in]{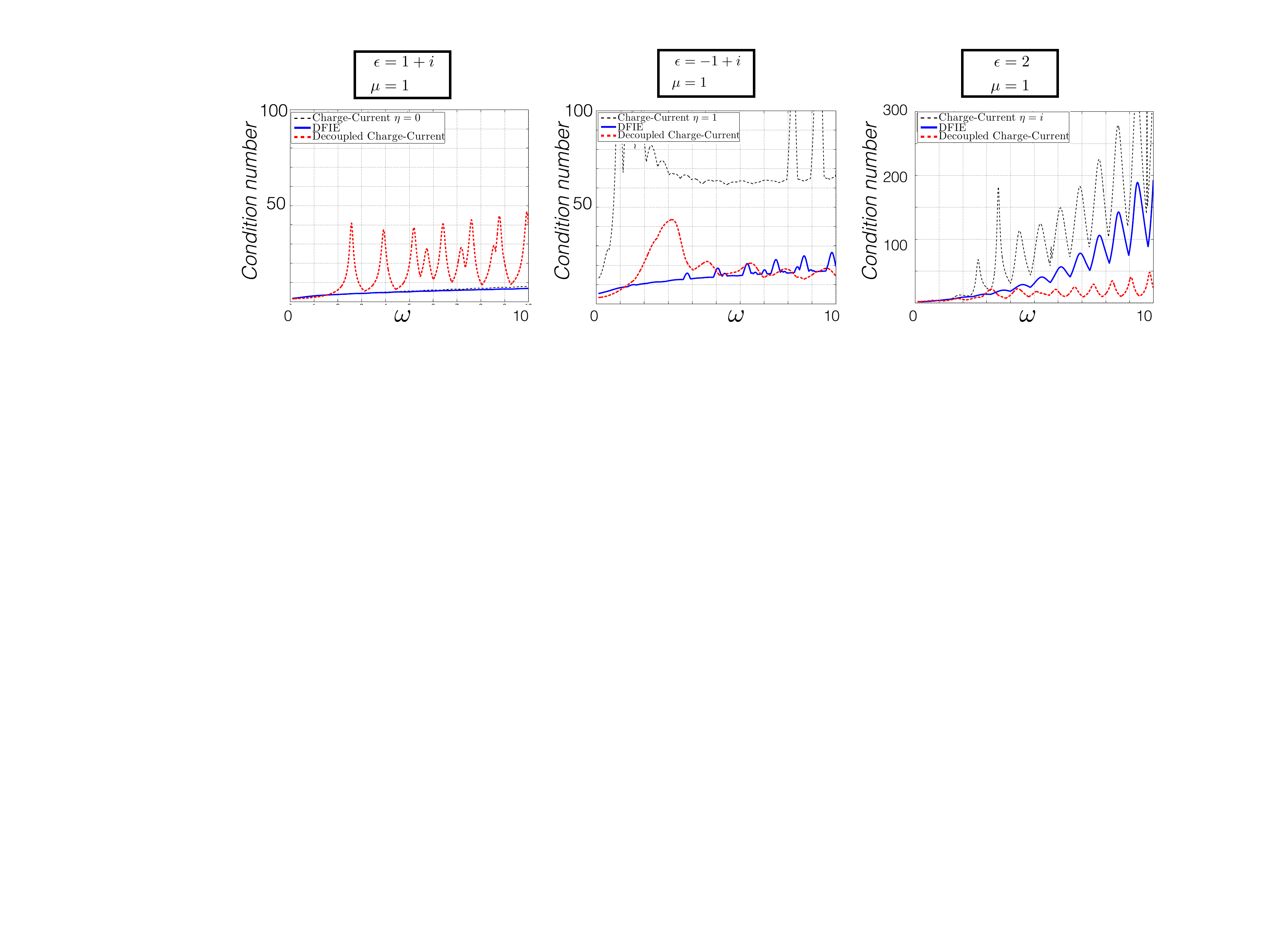}%
\end{center}
\caption{Condition number of the (scaled) DFIE, the charge-current
formulation with $\eta = 0$, and the decoupled charge-current (M\"{u}ller)
formulation for a homogeneous dielectric sphere of unit radius as a function
of frequency $\omega$ for various material parameters.
\label{Conditionfig2}}
\end{figure}

We plot the locations of 
spurious resonances of the decoupled charge-current formulation 
in Fig. \ref{ccresonances}.
For each point in the plane defined by the real parts of $\epsilon$ and 
$\mu$, we use the scheme described in 
\cite{homogeneous_uniqueness_10} to find positive
imaginary components for $\epsilon$ and $\mu$ that lead to
blow-up of the integral equation in the range $\omega \in [0,10]$.
The material is lossy (dissipative) by construction, so these resonances
are non-physical.

\begin{figure}[H]
\begin{center}
\includegraphics[width=4in]{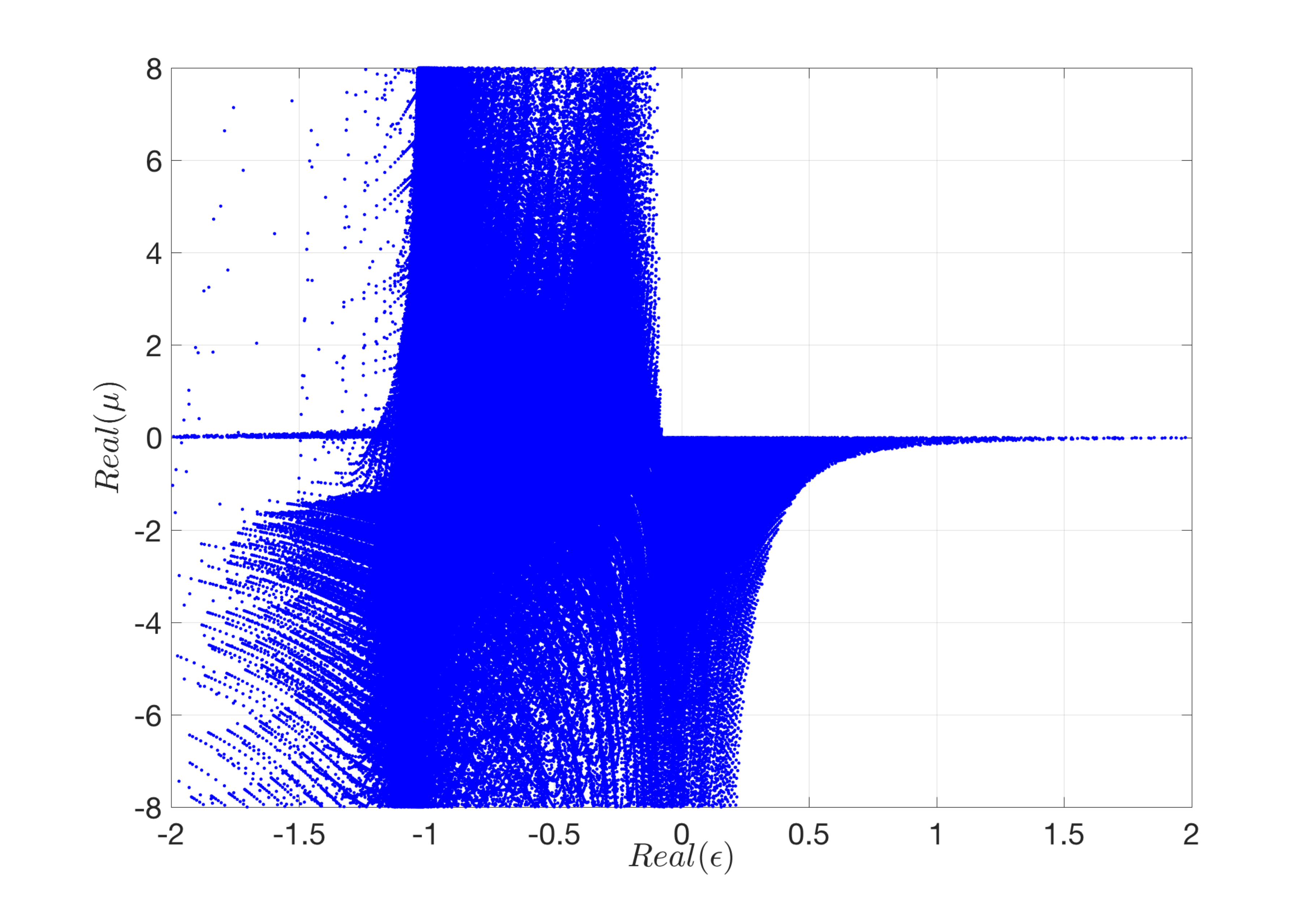}%
\end{center}
\caption{
Spurious resonances in the decoupled charge-current formulation.}
\label{ccresonances}
\end{figure}

\section{Further numerical validation}

While the primary purpose of this paper
is to present the derivation and analysis of the decoupled field integral 
equation, we illustrate its performance here using a high-order
locally corrected Nystr\"om discretization \cite{Nystrom_our}. 
Without entering into details, this method uses $n_{nodes} = 7$ points 
per curved triangular element for 5th order accuracy, 
$n_{nodes} = 25$ points per curved triangular element for 10th order accuracy, 
and $n_{nodes} = 45$ nodes per curved triangular element for 14th order 
accuracy.
The total number of unknowns is $N=6 \, N_{tri} \, n_{nodes}$,
as there are two tangential vector fields and two scalar unknowns.
corresponding to six degrees of freedom, at each point.

For our first example, we consider
the obstacle to be a single spheroid centered at the origin with 
semi-principal axes of length $a=1, b=2$, and $c=3$,
with $\epsilon=1.5$ and $\mu=1$, $\epsilon_0 = \mu_0 = 1$  and $\omega = 1$.
The incoming wave is assumed to be a plane wave propagating in the 
$z$-direction.  In Fig. \ref{Error_1} we plot the estimated 
relative error for the interior and 
exterior regions, using a reference solution with 200 triangles and 45 nodes 
per triangle.

\begin{figure}[H]
\begin{center}
\includegraphics[width=4in]{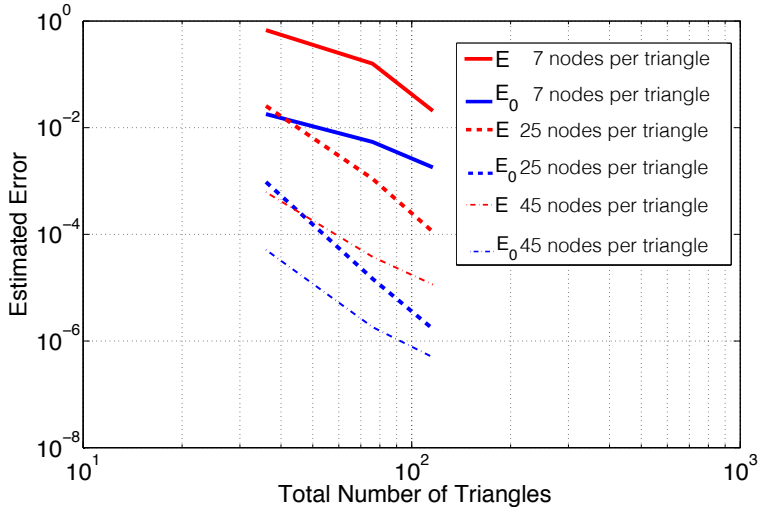}%
\end{center}
\caption{Estimated error for an ellipsoid with semi-principal axes of length
$a=1, b=2, c=3$ and $\epsilon=1.5,\mu=1$. 
The }
\label{Error_1}
\end{figure}

For our second example, we consider the same 
ellipsoid with $\omega = 1$ but with $\epsilon=-3+i$ and $\mu=-2+.5i$,
a so-called double negative index material.
Fig. \ref{Error_2} shows the estimated relative error for the interior and 
exterior regions.

\begin{figure}[H]
\begin{center}
\includegraphics[width=4in]{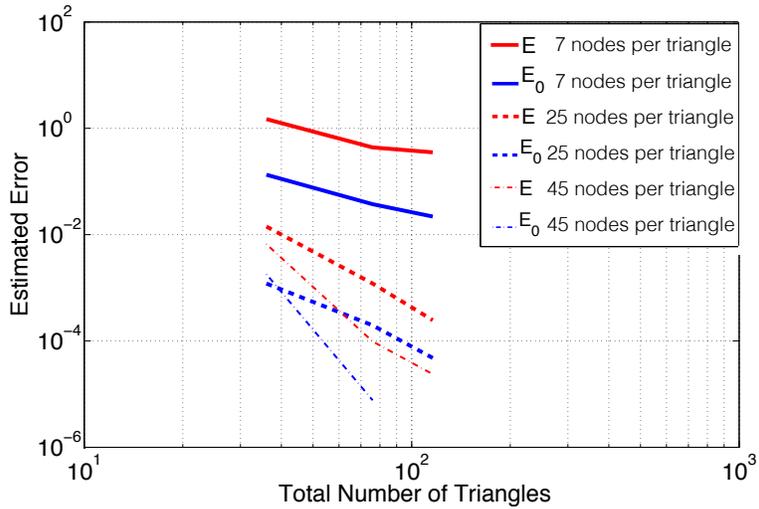}%
\end{center}
\caption{Estimated error for an ellipsoid with semi-principal axes of length
$a=1, b=2, c=3$ and $\epsilon=-3+i,\mu=-2+.5i$.}
\label{Error_2}
\end{figure}

Unlike the naive implementation of M\"{u}ller's method (or the PMCHW scheme
\cite{PMCHW}), there is clearly no ``dense-mesh breakdown" in evaluating 
the electromagnetic field, cconsistent with the theory.

To further demonstrate the stable behavior of our scheme at low frequencies,
we consider ithe obstacle to be a sphere of radius $R=1$ with 
$\epsilon=1.3, \mu=1$.
Fig. \ref{Error_3} shows the error, with the exact solution computed via the 
the Mie solution. There is clearly no low frequency breakdown.

\begin{figure}[H]
\begin{center}
\includegraphics[width=4in]{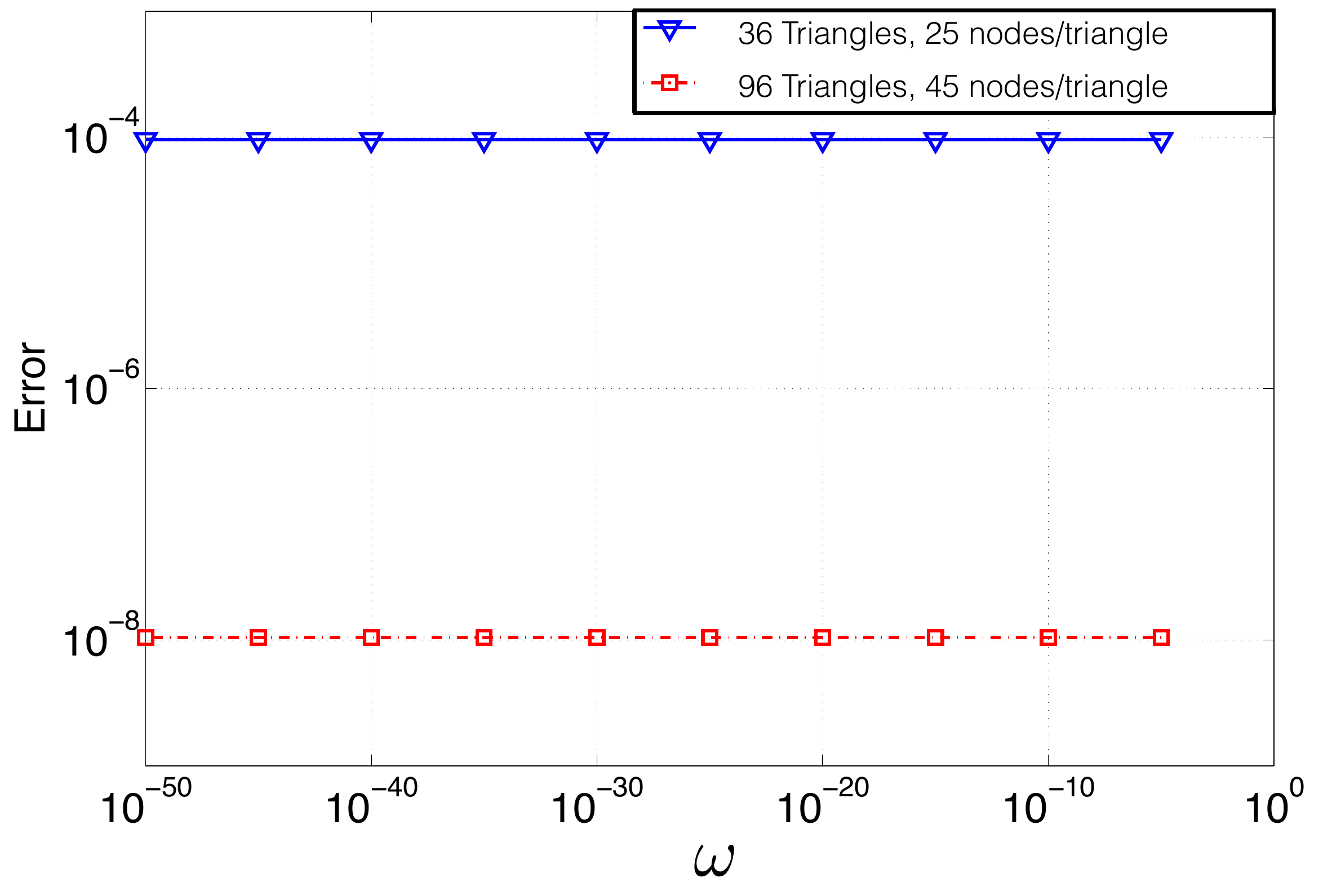}%
\end{center}
\caption{Estimated error for a sphere of radius $R=1$ with 
$\epsilon=1.3, \mu=1$.}
\label{Error_3}
\end{figure}

Our final example is a superellipsoid: $x^8+y^8+z^8\le 1$ with  
$\epsilon=1.3$, $\mu=1$, and $\omega = 0.1$. 
Fig. \ref{Error_4} shows the estimated relative error for the interior
and exterior regions. The error in the exterior region is smaller since it 
is measured at a distance $R=10$, where the integrals involve smoother integrands.
Notice, again, the absence of dense-mesh breakdown.

\begin{figure}[H]
\begin{center}
\includegraphics[width=4in]{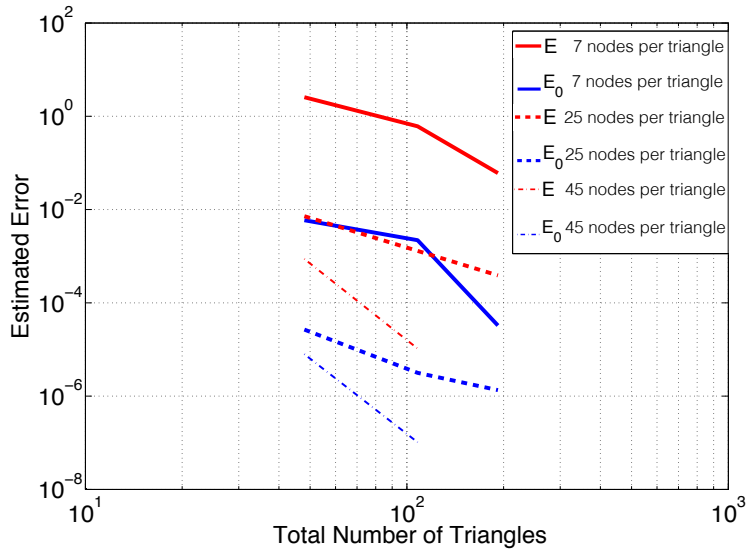}%
\end{center}
\caption{Estimated error for a superellipsoid $x^8+y^8+z^8\le1$ with 
$\epsilon=1.3, \mu=1$.}
\label{Error_4}
\end{figure}

\section{Conclusions}

We have presented a new method for simulating
electromagnetic scattering from homogeneous penetrable bodies,
based on reformulating the 
Maxwell equations in terms of two uncoupled vector Helmholtz systems.
one for the electric feld and one for the magnetic field.
We have shown that these partial differential equations have unique 
solutions and that those solutions correspond to the desired electromagnetic
field, Furthermore, we have shown that the vector Helmholtz equations
can be recast as boundary integral equations which are well-conditioned and
resonance-free for all lossy materials, including doubly 
negative materials, where $\Re{(\epsilon)}<0,\Re{(\mu)}<0$. 
We refer to these as {\em decoupled field integral equations}. They are
insensitive to the genus of the scatterers, and immune from dense-mesh 
(low frequency) breakdown.

Previously developed charge-current formulations avoid dense-mesh breakdown 
but can be subject to spurious resonances and can have erratic behavior, 
as seen in our numerical experiments above. 
Our approach is based on Fredholm integral equations of the second
kind, equipped with relatively straightforward proofs of existence 
and uniqueness based on standard energy estimates and the Rellich Lemma.

The extension of our method to problems involving both perfect conductors
and dielectrics is underway, as is its implementation using suitable fast
algorithms. It will be interesting to compare its 
performance with the generalized
Debye method - the only other approach which has been proven to lead to
well-posed integral equations for all passive materials  
\cite{gDEBYE2}.  Results from these developments will be reported at a later
date.

\section*{Acknowledgments}
This work was supported in part 
by the Office of the Assistant Secretary of Defense for Research and
Engineering and AFOSR under NSSEFF Program Award FA9550-10-1-0180 and in
part by the Spanish Ministry of Science and Innovation under the 
project TEC2016-78028-C3-3-P.

\appendix 
\section{Uniqueness for the Maxwell transmission problem} \label{maxuniq}

\begin{thm}\label{thm_uni_vector}
The Maxwell transmission problem has a unique solution for any permittivity
and permeability satisfying conditions (\ref{microwave_cond}) for $\omega>0$.
\end{thm}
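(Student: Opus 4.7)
The plan is to reduce to the homogeneous problem and use the complex Poynting identity in combination with the Silver--M\"uller radiation condition. By linearity, assume $\bE^{\In} = \bH^{\In} = 0$; the interface conditions \eqref{mwaveint1} then read $\bn \times \bE_0 = \bn \times \bE$ and $\bn \times \bH_0 = \bn \times \bH$ on $\partial D$, and the goal is to show $\bE_0, \bH_0, \bE, \bH$ all vanish.

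First I would compute $\nabla \cdot (\overline{\bE} \times \bH)$ from Maxwell's equations \eqref{meq1} and apply the divergence theorem in $D$ to obtain
\[
\int_{\partial D} \bn \cdot (\overline{\bE} \times \bH)\, dS = i\omega \int_D \bigl(\epsilon\,|\bE|^2 - \overline{\mu}\,|\bH|^2\bigr)\, dV,
\]
and the analogous identity in $B_R \setminus \overline{D}$ for a large ball $B_R$. Using the identity $\bn \cdot (\overline{\bE} \times \bH) = (\bn \times \overline{\bE}) \cdot \bH$, which depends only on tangential traces, the $\partial D$ interface contributions cancel by tangential continuity when the two identities are summed. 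Taking real parts and noting that $\epsilon_0, \mu_0 > 0$ are real, I expect to arrive at
\[
\Re \int_{\partial B_R} \bn \cdot (\overline{\bE_0} \times \bH_0)\, dS = -\omega \int_D \bigl(\Im(\epsilon)\,|\bE|^2 + \Im(\mu)\,|\bH|^2\bigr)\, dV,
\]
whose right-hand side is nonpositive under \eqref{microwave_cond}. Expanding the left-hand side via the Silver--M\"uller condition \eqref{radEH} and using transversality of the Maxwell far field, letting $R \to \infty$ should identify the limit with $\sqrt{\mu_0/\epsilon_0} \int_{S^2} |\bH_0^\infty|^2\, d\Omega \geq 0$, where $\bH_0^\infty$ is the far-field pattern. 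Both sides must therefore vanish.

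From the vanishing far-field pattern, Rellich's lemma applied componentwise to $\bH_0$ (each Cartesian component satisfies the scalar Helmholtz equation with wavenumber $k_0$) forces $\bH_0 \equiv 0$ in $\bbR^3 \setminus \overline{D}$, and Maxwell's equations then yield $\bE_0 \equiv 0$ there. If $\Im(\epsilon) > 0$, the vanishing of the interior integral forces $\bE \equiv 0$ in $D$, and Maxwell gives $\bH \equiv 0$; the case $\Im(\mu) > 0$ is symmetric.

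The main obstacle is the lossless case $\epsilon, \mu \in \bbR^+$, where the interior integral vanishes automatically and carries no information. In that case, I would use tangential continuity together with $\bE_0 = \bH_0 = 0$ to deduce $\bn \times \bE = \bn \times \bH = 0$ on $\partial D$; since $\bn \cdot (\nabla \times \bH)|_{\partial D}$ depends only on the tangential trace of $\bH$, Maxwell's equations then yield $\bn \cdot \bE = 0$, and symmetrically $\bn \cdot \bH = 0$, so $\bE = \bH = 0$ on $\partial D$. A brief computation in normal/tangential coordinates, using $\nabla \cdot \bE = 0$ in $D$ and $\nabla \times \bE|_{\partial D} = i\omega\mu\bH|_{\partial D} = 0$ (and symmetrically for $\bH$), then shows that $\partial_n \bE = \partial_n \bH = 0$ on $\partial D$ as well. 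Since each Cartesian component of $\bE$ and $\bH$ satisfies the scalar Helmholtz equation in $D$ with vanishing Cauchy data, Holmgren's theorem and unique continuation for $\Delta + k^2$ yield $\bE \equiv \bH \equiv 0$ in $D$, completing the uniqueness argument.
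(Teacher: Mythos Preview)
Your argument is correct and follows a standard Poynting-plus-Rellich route, but it is organized differently from the paper's proof and is longer in places where the paper is shorter.

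The paper does not integrate the Poynting vector over both domains. Instead it works directly with the single boundary quantity
\[
I_v = k_0 \int_{\partial D} (\bn\times\bE_0)\cdot\overline{\nabla\times\bE_0}\,ds,
\]
uses the homogeneous jump conditions to convert this to an interior integral involving $\bE$, and obtains the closed form $I_v = \alpha/\overline{\mu} - \overline{\epsilon}\,\beta$ with $\alpha,\beta\ge 0$. Under \eqref{microwave_cond} this gives $\Im I_v \ge 0$ in one stroke, with no case split between lossy and lossless interiors; Rellich then kills $\bE_0$ (hence $\bH_0$) in the exterior. Your approach reaches the same exterior conclusion via the far-field energy flux, which is fine, but it requires you to treat the lossless interior separately afterwards.

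The more substantive difference is how you dispose of the interior fields once $\bE_0=\bH_0=0$. The paper simply invokes the Stratton--Chu representation theorem: a Maxwell field in a bounded domain is determined by $\bn\times\bE$ and $\bn\times\bH$ on $\partial D$, so $\bn\times\bE=\bn\times\bH=0$ immediately gives $\bE=\bH=0$ in $D$. Your route through vanishing Cauchy data and Holmgren/unique continuation is valid, but it is considerably more elaborate and tacitly demands $C^1$ regularity of the interior fields up to the boundary (for the normal-derivative computation), whereas the representation-theorem argument needs only continuity of the tangential traces. If you retain your overall structure, replacing the Holmgren step by a one-line appeal to Stratton--Chu would tighten the proof.
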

\begin{proof}
Suppose that $\bE,\bH,\bE_0,\bH_0$ denotes a solution to the Maxwell transmission
problem with Helmholtz parameters 
$k^2=\omega^2\epsilon\mu, k_0^2=\omega^2\epsilon_0\mu_0$,
and homogeneous boundary data, so that
\begin{equation}
\begin{aligned}
\bn\times\big(\bE_0-\bE\big)=0\\
\bn\times\Big(\frac{\nabla\times\bE_0}{\mu_0}-\frac{\nabla\times\bE}{\mu}\Big)=0.
\end{aligned}
\end{equation} 

Due to the regularity properties of the fields at the boundary, we may apply 
the Rellich lemma \cite{CK1} which states that uniqueness holds if 
$\Im{E} \geq 0$ where 
\begin{equation}
I_v=k_0\int_{\partial D}\big(\bn\times\bE_0\cdot\nabla\times\overline{\bE_0}\big)ds
\label{Edef}
\end{equation}
and $\Im{E}$ denotes its imaginary part. 
Using the jump conditions and Green's identity \cite{CK1} we obtain:
\begin{equation}
\begin{aligned}
I_v&=k_0\int_{\partial D}\big(\bn\times\bE_0\cdot\nabla\times\overline{\bE_0}\big)ds=k_0\overline{\Big(\frac{\mu_0}{\mu}\Big)}\int_{\partial D}\big(\bn\times\bE\cdot\nabla\times\overline{\bE}\big)ds=\\
&=k_0\overline{\Big(\frac{\mu_0}{\mu}\Big)}\int_{D}|\nabla\times\bE|^2-\overline{k}^2|\bE|^2 dv=\frac{1}{\overline{\mu}}\Big(k_0\overline{\mu_0}\int_{D}|\nabla\times\bE|^2dv\Big)+=\\
&-\overline{\epsilon}\Big(k_0\omega^2\overline{\mu_0}\int_{D}|\bE|^2 dv\Big)=\frac{1}{\overline{\mu}}\alpha-\overline{\epsilon}\beta \, .
\end{aligned}
\end{equation} 
From (\ref{microwave_cond}), we find that $\alpha$ and $\beta$ are real, 
non-negative numbers and $\Im(I_v)\ge0$. 
Thus, $\bE_0=0 \ \forall \bx \in \mathbb{R}^3/\overline{D}$. 
Since $\bH_0=\frac{\nabla\times\bE_0}{i\omega\mu_0}$, we also have that 
$\bH_0=0 \ \forall \bx \in \mathbb{R}^3/\overline{D}$. 
From the jump conditions we get $\bn\times\bE=\bn\times\bH=0$, therefore, using the representation theorem (see \cite{CK1}) we get that $\bE=0,\bH=0, \ \forall \bx \in D$.

We will have occasion to consider the dual case, where $\epsilon,\mu>0$ and 
$\epsilon_0,\mu_0$ satisfy \eqref{microwave_cond} for which we now show that
we the transmission problem also has a unique solution.
First, if $\Im(k_0)=0$, the result above is sufficient.
Thus, we may assume that $\Im(k_0>0)$. 
Let $B_R$ be a ball with radius $R$ that contains the obstacle $D$. 
Applying Green's identity we obtain
\begin{equation}
\begin{aligned}
&\frac{1}{\mu_0}\int_{\partial B_R}\big(\bn\times\overline{\bE_0}\cdot\nabla\times\bE_0\big)ds=\frac{1}{\mu_0}\int_{B_R/\overline{D}}|\nabla\times\bE_0|^2-k_0^2|\bE_0|^2 dv+\\
&\qquad\qquad +\frac{1}{\mu}\int_{D}|\nabla\times\bE|^2-k^2|\bE|^2 dv.
\end{aligned}
\end{equation} 
Taking the limit as $R\rightarrow \infty$, the left hand side tends to zero 
due to the radiation condition and the fact that $\Im(k_0>0)$ so that the 
outer material is lossy. Thus,
\begin{equation}
\begin{aligned}
0=\frac{1}{\mu_0}\int_{\mathbb{R}^3/\overline{D}}|\nabla\times\bE_0|^2-k_0^2|\bE_0|^2 dv
+\frac{1}{\mu}\int_{D}|\nabla\times\bE|^2-k^2|\bE|^2 dv.
\end{aligned}
\end{equation}
Taking the imaginary part, we get
\begin{equation}
\begin{aligned}
0=&\Im\Big(\frac{1}{\mu_0}\int_{\mathbb{R}^3/\overline{D}}|\nabla\times\bE_0|^2-k_0^2|\bE_0|^2 dv\Big)=\\
=&\Im\Big(\frac{1}{\mu_0}\Big)\int_{\mathbb{R}^3/\overline{D}}|\nabla\times\bE_0|^2 dv-\Im(\epsilon_0)\omega^2\int_{\mathbb{R}^3/\overline{D}}|\bE_0|^2 dv.
\end{aligned}
\end{equation} 
Recall now that $\mu_0$ and $\epsilon_0$ cannot both be real. 
Thus, if  $\Im(\mu_0)>0$, then $\nabla\times\bE_0=0$.
If $\Im(\epsilon_0)>0$, then $\bE_0=0$. In either case, 
we get $\bH_0=\bE_0=0 \ \forall \bx \in \mathbb{R}^3/\overline{D}$, and 
$\bE=0,\bH=0, \ \forall \bx \in D$, as desired.
\end{proof}

\begin{remark}
At zero frequency, the Maxwell transmission problem
no longer has a unique solution,
unless additional constraints are imposed on the normal data.
Our formulation in terms of the vector electric and magnetic transmission
problems is unique even at zero frequency and yields the (unique) limit of 
the Maxwell transmission problem as $\omega \rightarrow 0^+$.
\end{remark}

\section{The scalar transmission problems} \label{scalarpdes}

\begin{definition}
By the {\bf scalar electric transmission problem} we mean the calculation of
a scalar field
\[ 
u = \left\{ \begin{array}{ll}
u_i \in C^2(D)\cap C(\overline{D}) & \mbox{if $\xb \in D$}; \\
u_0 \in C^2(\mathbb{R}^3 \setminus \overline{D})\cap C(\mathbb{R}^3/D) 
& \mbox{if $\xb \in \mathbb{R}^3\setminus \overline{D}$}, \end{array}
\right.
\]
with
\begin{equation}
\begin{aligned}
\Delta u_i+k^2u_i &=0, &\quad  \bx &\in D\\
\Delta u_0+k_0^2u_0 &=0,  &\quad \bx &\in \mathbb{R}^3/\overline{D}
\end{aligned}
\end{equation}
that satisfies the interface conditions
\begin{equation}\label{cond_seH1}
u_0-u_i=q
\end{equation}
\begin{equation}\label{cond_seH2}
\frac{1}{\mu_0}\frac{\partial u_0}{\partial n}-
\frac{1}{\mu}\frac{\partial u_i}{\partial n}=p
\end{equation}
and the Sommerfeld radiation condition
\begin{equation}\label{radS}
\begin{array}{ll}
\frac{\bx}{|\bx|}\cdot\nabla u_0(\bx)-ik_0u_0(\bx)=o\Big(\frac{1}{|\bx|}\Big), \quad &|\bx|\rightarrow \infty\, ,
\end{array}
\end{equation}
where $p,q\in C^{0,\alpha}(\partial D)$.
Since we will have occasion to consider the scalar transmission problem satisfied by 
$u = \nabla \cdot \bE$, and our smoothness assumptions on $\bE$ don't guarantee
differentiability at the boundary, we will sometimes replace 
the condition \eqref{cond_seH2} 
wth its weak counterpart:
\begin{equation}
\lim_{h\rightarrow 0, h>0}\int_{\partial D}
\left(\frac{1}{\mu_0}\frac{\partial u_0}{\partial n}(\bx+h\bn)-\frac{1}{\mu}\frac{\partial u_i}{\partial n}(\bx-h\bn)-p(\bx)\right)w(\bx)ds=0
\label{weakform}
\end{equation}
for all $w(\bx)\in C^{1,\alpha}(\partial D)$.
\end{definition}

\begin{definition}
By the {\bf scalar magnetic transmission problem} we mean the calculation of
a scalar field
\[ 
v = \left\{ \begin{array}{ll}
v_i \in C^2(D)\cap C(\overline{D}) & \mbox{if $\xb \in D$}; \\
v_0 \in C^2(\mathbb{R}^3 \setminus \overline{D})\cap C(\mathbb{R}^3/D) 
& \mbox{if $\xb \in \mathbb{R}^3\setminus \overline{D}$}, \end{array}
\right.
\]
with
\begin{equation}
\begin{aligned}
\Delta v_i+k^2v_i &=0, &\quad  \bx &\in D\\
\Delta v_0+k_0^2v_0 &=0,  &\quad \bx &\in \mathbb{R}^3/\overline{D}
\end{aligned}
\end{equation}
that satisfies the interface conditions
\begin{equation}\label{cond_shH1}
v_0-\varphi=q'
\end{equation}
\begin{equation}\label{cond_shH2}
\frac{1}{\epsilon_0}\frac{\partial v_0}{\partial n}-\frac{1}{\epsilon}\frac{\partial v_i}{\partial n}=p'
\end{equation}
and the Sommerfeld radiation condition (\ref{radS}).
Since we will have occasion to consider the scalar transmission problem satisfied by 
$v = \nabla \cdot \bH$, and our smoothness assumptions on $\bH$ don't guarantee
differentiability at the boundary, we will sometimes replace 
the condition \eqref{cond_shH2} 
wth its weak counterpart, as in \eqref{weakform}.
\end{definition}

\begin{lemma}
\label{scalunique}
The scalar electric and magnetic transmission problems have unique solutions for
permeabilities and permittivities satisfying \eqref{microwave_cond}
for $\omega \geq 0$.
\end{lemma}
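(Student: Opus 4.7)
The plan is to follow the strategy of Theorem \ref{thm_uni_vector}, adapted to the scalar Helmholtz setting; I will treat the scalar electric problem in detail, since the magnetic case is identical after swapping the roles of $\epsilon$ and $\mu$. By linearity, it suffices to show that the homogeneous problem ($q=p=0$) admits only the trivial solution. The key quantity is
\begin{equation*}
I_s := \int_{\partial D}\overline{u_0}\,\frac{\partial u_0}{\partial n}\,ds,
\end{equation*}
which I would rewrite using the homogeneous jump conditions $u_0=u_i$ and $\partial_n u_0 = (\mu_0/\mu)\,\partial_n u_i$ on $\partial D$, followed by the scalar Green's identity in $D$, to obtain
\begin{equation*}
I_s \;=\; \frac{\mu_0}{\mu}\int_D |\nabla u_i|^2\,dv \;-\; \omega^2\mu_0\epsilon\int_D |u_i|^2\,dv.
\end{equation*}

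For the case $\epsilon_0,\mu_0\in\bbR^+$, so that $k_0>0$ is real, I apply Green's identity to $u_0$ in the annular region $B_R\setminus\overline{D}$ and let $R\to\infty$. The Sommerfeld radiation condition \eqref{radS} then produces
\begin{equation*}
k_0\lim_{R\to\infty}\int_{\partial B_R}|u_0|^2\,ds \;=\; \Im(I_s) \;=\; \mu_0\,\Im(1/\mu)\int_D|\nabla u_i|^2\,dv \;-\; \omega^2\mu_0\,\Im(\epsilon)\int_D|u_i|^2\,dv.
\end{equation*}
Under \eqref{microwave_cond} we have $\Im(1/\mu)\le 0$ and $\Im(\epsilon)\ge 0$, so the right-hand side is non-positive while the left-hand side is non-negative; both must therefore vanish, and Rellich's lemma forces $u_0\equiv 0$ in the exterior. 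The homogeneous jumps then give Cauchy data $u_i=0=\partial_n u_i$ on $\partial D$, and Holmgren's unique continuation (or the Green's representation in $D$) yields $u_i\equiv 0$. If instead the exterior is dissipative, with $\Im(k_0)>0$, the spherical boundary integral decays exponentially as $R\to\infty$, and a single application of Green's identity over $B_R\setminus\overline{D}$ combined with its counterpart in $D$ (patched by the homogeneous jumps) delivers the same conclusion via the sign analysis employed in the dual portion of Theorem \ref{thm_uni_vector}.

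The main obstacle is that this lemma will eventually be applied to $u=\nabla\cdot\bE$, whose normal derivative on $\partial D$ is only defined in the weak sense \eqref{weakform}. Consequently, each use of Green's identity and each substitution of a jump condition must be justified by a limiting argument --- either by approaching $\partial D$ along parallel surfaces and pairing with admissible test functions as in \eqref{weakform}, or by replacing $u$ with its layer-potential representation inherited from \eqref{eerepresentation} and working with the well-understood traces of single-layer potentials of $C^{0,\alpha}$ densities. Once this regularity hurdle is cleared, the energy identity and the case analysis are essentially algebraic. Finally, the degenerate case $\omega=0$ reduces to a Laplace transmission problem, handled separately by the standard Dirichlet-energy argument together with decay of $u_0$ at infinity.
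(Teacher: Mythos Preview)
Your argument follows essentially the same route as the paper: both proofs rewrite the boundary quantity $\int_{\partial D}\overline{u_0}\,\partial_n u_0\,ds$ (the paper uses its conjugate, multiplied by $k_0$) via the homogeneous jumps and Green's identity in $D$, obtain an expression of the form $\frac{1}{\overline{\mu}}\alpha-\overline{\epsilon}\beta$ with $\alpha,\beta\ge 0$, and conclude by Rellich; both then treat a dual case with $\Im(k_0)>0$ by an energy identity on $B_R\setminus\overline{D}$, and dismiss the static case as standard.

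The one substantive difference is how the regularity obstacle you flag is dispatched. Rather than working through limits along parallel surfaces or falling back on the specific layer-potential representation \eqref{eerepresentation} (which would make the lemma less self-contained), the paper simply invokes a regularity theorem for scalar transmission problems (Theorem~3.3 in H\"ahner's thesis \cite{Alemanes}) to assert that any homogeneous solution is automatically $C^{1,\alpha}$ up to the boundary on each side. Once that is in hand, the normal derivatives exist in the strong sense and Green's identity applies without further justification. This is cleaner than either of your proposed workarounds and keeps the lemma independent of the particular integral representation used later.
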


\begin{proof}
We restrict our attention to the electric transmission problem, since the proof for 
magnetic case is analogous. Thus,
Suppose that $\psi,\psi_0$ are a solution of the homogeneous scalar electric transmission
problem (that is, $p,q = 0$). From Theorem 3.3 in \cite{Alemanes}, 
we find that the homogeneous solution satisfies the following regularity 
condition:
$\psi_0\in C^2(\mathbb{R}^3/\overline{D})\cap C^{1,\alpha}(\mathbb{R}^3/D)$, $\psi\in C^2(D)\cap C^{1,\alpha}(\overline{D})$. 

First, let us assume that $\omega >0$,
$\epsilon_0,\mu_0 > 0$ and 
$\epsilon,\mu$ satisfy the material properties 
given by (\ref{microwave_cond}).
In order to apply the Rellich lemma we consider the quantity
\begin{equation}
\begin{aligned}
E &=k_0\int_{\partial D}\psi_0\frac{\partial\overline{\psi_0}}{\partial n}ds=k_0\overline{\Big(\frac{\mu_0}{\mu}\Big)}\int_{\partial D}\psi\frac{\partial\overline{\psi}}{\partial n}ds=k_0\overline{\Big(\frac{\mu_0}{\mu}\Big)}\int_{D}|\nabla\psi|^2-\overline{k}^2|\psi|^2dv=\\
&=\frac{1}{\overline{\mu}}\Big(k_0\overline{\mu_0}\int_{D}|\nabla\psi|^2dv\Big)-\overline{\epsilon}\Big(\omega^2k_0\overline{\mu_0}\int_{D}|\psi|^2dv\Big)=\frac{1}{\overline{\mu}}\alpha-\overline{\epsilon}\beta,
\end{aligned}
\end{equation}
where $\alpha$ and $\beta$ are real, non-negative numbers. Therefore,
$\Im(E)\ge0$ and $\psi_0=0 \ \forall \bx \in \mathbb{R}^3/\overline{D}$. From the jump conditions, we get $\psi|_{\partial D}=\frac{\partial \psi}{\partial n}=0$.
Therefore, from the representation theorem, we have that 
$\psi=0 \ \forall \bx \in D$.

Let us now consider the dual problem, where $\epsilon,\mu > 0$ and
$\epsilon_0,\mu_0$ satisfy \eqref{microwave_cond}.
If $\Im(k_0)=0$, then we are in the situation above. Thus, we can assume
that $\Im(k_0>0)$. Let $B_R$ be a ball of radius $R$ that contains the obstacle
$D$. Applying Green's identity we have
\begin{equation}
\begin{aligned}
&\frac{1}{\mu_0}\int_{\partial B_R}\overline{\psi_0}\frac{\partial\psi}{\partial n}ds=\frac{1}{\mu_0}\int_{B_R/\overline{D}}|\nabla\psi_0|^2-k_0^2|\psi_0|^2 dv+\frac{1}{\mu}\int_{D}|\nabla\psi|^2-k^2|\psi|^2 dv.
\end{aligned}
\end{equation} 
Taking the limit $R\rightarrow \infty$, the left-hand side tends to zero due 
to the radiation condition and the fact that the outer region is lossy. Thus, 
\begin{equation}
\begin{aligned}
0=\frac{1}{\mu_0}\int_{\mathbb{R}^3/\overline{D}}|\nabla\psi_0|^2-k_0^2|\psi_0|^2 dv
+\frac{1}{\mu}\int_{D}|\nabla\psi|^2-k^2|\psi|^2 dv.
\end{aligned}
\end{equation} 
Taking the imaginary part, we get
\begin{equation}
\begin{aligned}
0=&\Im\Big(\frac{1}{\mu_0}\int_{\mathbb{R}^3/\overline{D}}|\nabla\psi_0|^2-k_0^2|\psi_0|^2 dv\Big)
=\Im\Big(\frac{1}{\mu_0}\Big)\int_{\mathbb{R}^3/\overline{D}}|\nabla\psi_0|^2 dv-\Im(\epsilon_0)\omega^2\int_{\mathbb{R}^3/\overline{D}}|\psi_0|^2 dv.
\end{aligned}
\end{equation} 
Now, $\mu_0$ and $\epsilon_0$ cannot both be real. 
If $\Im(\mu_0)>0$, then $\nabla\psi_0=0$. If $\Im(\epsilon_0)>0$,
then $\psi_0=0$. In either case, we get $\psi_0=0 \ \forall \bx \in \mathbb{R}^3/\overline{D}$, and $\psi=0 \ \forall \bx \in D$.

The proof for the static case $\omega = 0$ is standard \cite{CK1}.
\end{proof}

\begin{lemma}\label{comp1}
Let $\bE,\bE_0$ be a solution of the homogeneous vector electric transmission problem.
Then $\psi=\nabla\cdot\bE,\psi_0=\nabla\cdot\bE_0$ satisfy the homogeneous scalar 
electric transmission problem.
\end{lemma}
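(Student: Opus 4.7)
My plan is to check each requirement of the scalar electric transmission problem in turn for the candidate $\psi_0 = \nabla\cdot\bE_0$, $\psi = \nabla\cdot\bE$: the Helmholtz PDEs, the two (homogeneous) jump conditions, and the Sommerfeld radiation condition. I will use three ingredients repeatedly: the vector identity $\nabla\times\nabla\times\bE = \nabla(\nabla\cdot\bE)-\Delta\bE$, the surface identity $\bn\cdot(\nabla\times\bF) = -\nabla_S\cdot(\bn\times\bF)$, and the four homogeneous interface conditions \eqref{cond_vE1}--\eqref{cond_sE2} with right-hand sides zero.

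First I would dispatch the PDEs: applying $\nabla\cdot$ to $\Delta\bE+k^2\bE=0$ and using that $\Delta$ commutes with $\nabla\cdot$ gives $\Delta\psi+k^2\psi=0$ in $D$; the same argument in the exterior gives $\Delta\psi_0+k_0^2\psi_0=0$. The first (Dirichlet-type) jump condition $\psi_0-\psi=0$ then is literally the homogeneous version of \eqref{cond_sE1}. This portion is essentially immediate.

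The main work is the flux-type jump condition $\tfrac{1}{\mu_0}\partial_n\psi_0-\tfrac{1}{\mu}\partial_n\psi=0$, which I would verify in the weak form \eqref{weakform} since our regularity assumptions on $\bE,\bE_0$ do not give pointwise normal derivatives of $\nabla\cdot\bE$ at $\partial D$. From $\nabla\times\nabla\times\bE = \nabla\psi + k^2\bE$, dotting with $\bn$ gives
\begin{equation*}
\tfrac{1}{\mu}\partial_n\psi = \tfrac{1}{\mu}\bn\cdot(\nabla\times\nabla\times\bE) - \tfrac{k^2}{\mu}\bn\cdot\bE,
\end{equation*}
and the analogous expression holds for $\psi_0$. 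Subtracting and using $k^2/\mu=\omega^2\epsilon$, $k_0^2/\mu_0=\omega^2\epsilon_0$, the ``lower order'' piece is $-\omega^2\bn\cdot(\epsilon_0\bE_0-\epsilon\bE)$, which vanishes by the homogeneous form of \eqref{cond_sE2}. For the ``principal'' piece, I apply the identity $\bn\cdot(\nabla\times\bF)=-\nabla_S\cdot(\bn\times\bF)$ with $\bF = \tfrac{1}{\mu_0}\nabla\times\bE_0 - \tfrac{1}{\mu}\nabla\times\bE$ (each of which is smooth up to $\partial D$ from either side because it equals $\nabla\times\bE_{0,\mathrm{int}}$, etc.), so that after integrating against a test function $w\in C^{1,\alpha}(\partial D)$ and moving $\nabla_S$ onto $w$, the surface integral reduces to a pairing of $\bn\times\bF$ with surface derivatives of $w$. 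By the homogeneous form of \eqref{cond_vE2}, $\bn\times\bF=0$, so this vanishes. The only technicality is justifying the limit $h\to 0^+$ in \eqref{weakform}, for which I would write $\tfrac{1}{\mu_0}\partial_n\psi_0(\bx+h\bn)$ as $\bn\cdot\nabla\times\nabla\times\bE_0(\bx+h\bn)/\mu_0$ minus the lower order term, integrate against $w$, and pass to the limit using dominated convergence together with the uniform continuity of the integrand in a one-sided neighborhood of $\partial D$ (this is where the $C^{1,\alpha}$ regularity up to the boundary of $\nabla\times\bE_0$ and $\nabla\times\bE$ is used). I expect this step to be the main technical obstacle.

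Finally, for the Sommerfeld condition I would derive it from \eqref{radE}. Dotting \eqref{radE} with $\bx/|\bx|$ kills the first term, giving $\psi_0(\bx) = ik_0\,(\bx/|\bx|)\cdot\bE_0(\bx) + o(1/|\bx|)$. Differentiating, or equivalently expanding $\bE_0$ as the radiating far field $\bE_0\sim e^{ik_0 r}\bF(\hat\bx)/r$ (standard from Helmholtz plus \eqref{radE}) and computing $\nabla\cdot\bE_0$ to leading order, gives $\psi_0 \sim ik_0\,\hat\bx\cdot\bF\,e^{ik_0 r}/r$ and $\hat\bx\cdot\nabla\psi_0 \sim -k_0^2\,\hat\bx\cdot\bF\,e^{ik_0 r}/r$, whence $\hat\bx\cdot\nabla\psi_0 - ik_0\psi_0 = o(1/|\bx|)$, as required. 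Combining the four verifications shows $(\psi_0,\psi)$ solves the homogeneous scalar electric transmission problem.
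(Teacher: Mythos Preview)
Your argument is correct and follows essentially the same route as the paper: both verify the Helmholtz equations and the Dirichlet jump directly, and both obtain the Neumann-type jump in weak form by rewriting $\partial_n\psi/\mu$ via $\nabla\times\nabla\times\bE=\nabla\psi+k^2\bE$, killing the lower-order term with \eqref{cond_sE2}, and then integrating by parts on $\partial D$ to invoke \eqref{cond_vE2}. You additionally verify the Sommerfeld condition for $\psi_0$ from \eqref{radE}, a point the paper's proof leaves implicit.
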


\begin{proof}
Let $\bE_0,\bE$ be a solution of the homogeneous vector transmission problem.
Clearly, the functions $\psi_0=\nabla\cdot\bE_0,\psi=\nabla\cdot\bE$ satisfy 
the scalar Helmholtz equations $\Delta\psi_0+k_0^2\psi_0=0,\Delta\psi+k^2\psi=0$ in 
the corresponding regions, as well as the regularity conditions 
$\psi_0\in C^2(\mathbb{R}^3/\overline{D})\cap C(\mathbb{R}^3/D)$, 
$\psi\in C^2(D)\cap C(\overline{D})$. From \eqref{cond_sE1} we have
\begin{equation}\label{cond_s1}
\psi_0-\psi=0.
\end{equation}

Let us now assume that 
$\bn\times\nabla\times\bE_0$ and $\bn\times\nabla\times\bE$ are sufficiently smooth and 
that we can take the surface divergence. Then, from 
\eqref{cond_vE2}, \eqref{cond_sE2} and a little algebra, it is easy to check that
\begin{equation}
\frac{1}{\mu_0}\frac{\partial\psi_0}{\partial n}-
\frac{1}{\mu}\frac{\partial\psi}{\partial n}=0.
\end{equation}
If $\bn\times\nabla\times\bE_0$ and $\bn\times\nabla\times\bE$ are merely continuous,
let us define the following quantity:
\begin{equation}
I_1=\lim_{h\rightarrow 0^+}\int_{\partial D}\bn\cdot\Big(\frac{\nabla\times\nabla\times\bE_0}{\mu_0}(\bx+h\bn)-\frac{\nabla\times\nabla\times\bE}{\mu_0}(\bx-h\bn)\Big)w(\bx)ds \, ,
\end{equation}
where 
$\lim_{h\rightarrow 0^+}$ implies taking the limit with $h>0$.
Using the interface condition \eqref{cond_sE2} it is easy to show that
\begin{equation}
\begin{aligned}
I_1&=\lim_{h\rightarrow 0^+}\int_{\partial D}\bn\cdot\Big(\frac{\nabla\nabla\cdot\bE_0}{\mu_0}(\bx+h\bn)-\frac{\nabla\nabla\cdot\bE}{\mu_0}(\bx-h\bn)\Big)w(\bx)ds=\\
&=\lim_{h\rightarrow 0^+}\int_{\partial D}\Big(\frac{1}{\mu_0}\frac{\partial \psi_0}{\partial n}(\bx+h\bn)-\frac{1}{\mu}\frac{\partial\psi}{\partial n}(\bx-h\bn)\Big)w(\bx)ds.
\end{aligned}
\end{equation}
On the other hand, integration by parts yields
\begin{equation}
\begin{aligned}
I_1&=\lim_{h\rightarrow 0^+}\int_{\partial D}\bn\cdot\Big(\frac{\nabla\times\nabla\times\bE_0}{\mu_0}(\bx+h\bn)-\frac{\nabla\times\nabla\times\bE}{\mu_0}(\bx-h\bn)\Big)w(\bx)ds=\\
&=\lim_{h\rightarrow 0^+}\int_{\partial D}-\nabla_s\cdot\Big(\frac{\bn\times\nabla\times\bE_0}{\mu_0}(\bx+h\bn)-\frac{\bn\times\nabla\times\bE}{\mu_0}(\bx-h\bn)\Big)w(\bx)ds=\\
&=\lim_{h\rightarrow 0^+}\int_{\partial D}\Big(\frac{\bn\times\nabla\times\bE_0}{\mu_0}(\bx+h\bn)-\frac{\bn\times\nabla\times\bE}{\mu_0}(\bx-h\bn)\Big)\cdot\nabla w(\bx)ds=0,
\end{aligned}
\end{equation}
where $\nabla_s\cdot \J$ is the surface divergence of the tangential vector field
$\J$, and we have used
\eqref{cond_vE2} in the last step.
Note that the reason we require this weak formulation is that, in general, 
we cannot interchange the limit $h\rightarrow 0$ with the surface divergence.
\end{proof}

\begin{lemma}\label{comp2}
Let $\bH,\bH_0$ be a solution of the homogeneous vector magnetic transmission problem.
Then $\varphi=\nabla\cdot\bH,\varphi_0=\nabla\cdot\bH_0$ satisfy the homogeneous 
scalar magnetic transmission problem.
\end{lemma}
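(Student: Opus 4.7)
My plan is to mirror the proof of Lemma \ref{comp1} with the substitution $\epsilon \leftrightarrow \mu$ and $\bE \leftrightarrow \bH$ throughout. First, since divergence commutes with the Laplacian, applying $\nabla \cdot$ to the vector Helmholtz equations yields $\Delta \varphi + k^2 \varphi = 0$ in $D$ and $\Delta \varphi_0 + k_0^2 \varphi_0 = 0$ in $\mathbb{R}^3 \setminus \overline{D}$, with the required interior/exterior regularity of $\varphi, \varphi_0$ inherited from that assumed on $\bH, \bH_0$. The Sommerfeld radiation condition on $\varphi_0$ follows by taking the divergence of the vector radiation condition \eqref{radH}.

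For the interface conditions, the first, $\varphi_0 - \varphi = 0$, is simply the homogeneous version of \eqref{cond_sH1} and comes for free. The second, weak interface condition requires more work. I would introduce
\[ I_1 = \lim_{h \to 0^+} \int_{\partial D} \bn \cdot \left( \frac{\nabla \times \nabla \times \bH_0}{\epsilon_0}(\bx + h\bn) - \frac{\nabla \times \nabla \times \bH}{\epsilon}(\bx - h\bn) \right) w(\bx) \, ds \]
and evaluate it in two different ways. On one hand, the vector identity $\nabla \times \nabla \times \bH = \nabla(\nabla \cdot \bH) - \Delta \bH = \nabla \varphi + k^2 \bH$ (together with its counterpart for $\bH_0$), combined with the homogeneous form of \eqref{cond_sH2} that kills the resulting $\omega^2(\mu_0 \bn \cdot \bH_0 - \mu \bn \cdot \bH)$ remainder, rewrites $I_1$ as
\[ \lim_{h \to 0^+} \int_{\partial D} \left( \frac{1}{\epsilon_0} \frac{\partial \varphi_0}{\partial n}(\bx + h\bn) - \frac{1}{\epsilon} \frac{\partial \varphi}{\partial n}(\bx - h\bn) \right) w(\bx) \, ds, \]
which is exactly the weak form of \eqref{cond_shH2} we wish to verify. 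On the other hand, converting $\bn \cdot \nabla \times \nabla \times$ into $-\nabla_s \cdot (\bn \times \nabla \times \, \cdot\,)$ via surface integration by parts, transferring the resulting surface divergence onto the test function $w$, and invoking the homogeneous tangential condition \eqref{cond_vH2} forces $I_1 = 0$.

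The only subtlety, which is the same one that arises in Lemma \ref{comp1}, is that when $\bn \times \nabla \times \bH$ is merely continuous, we cannot interchange the limit $h \to 0^+$ with the surface divergence; this is precisely why the interface condition must be stated weakly via the analogue of \eqref{weakform}. Beyond this, no new obstacle appears, and the argument is essentially a symbolic relabeling of the electric case.
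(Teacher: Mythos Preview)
Your proposal is correct and matches the paper's approach exactly: the paper's proof is the single sentence ``The proof is analogous to that of Lemma \ref{comp1},'' and you have simply (and accurately) unpacked what that analogy entails, with the $\epsilon\leftrightarrow\mu$, $\bE\leftrightarrow\bH$ swap and the same weak-form argument for the second interface condition.
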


\begin{proof}
The proof is analogous to that of Lemma \ref{comp1}.
\end{proof}

We are now in a position to prove the desired uniqueness result, namely Theorem
\ref{uniquethm}. 

\section{Proof of Theorem \ref{uniquethm}} \label{pdeuniq}

\noindent
{\rm Theorem}:
{\em The vector electric and magnetic transmission problems have unique solutions
for $\omega \ge0$.}

\begin{proof}
Let us assume that $\omega>0$ and that $\bE_0,\bE$ are a solution of the homogeneous 
vector electric transmission problem. 
Then the functions $\psi_0=\nabla\cdot\bE_0,\psi=\nabla\cdot\bE$ satisfy 
the homogeneous scalar electric transmission problem, and $\psi_0=\psi=0$ from
Lemma \ref{scalunique}.

Thus, $\bE_0$ and $\bE$, together with 
$\bH_0:=\frac{\nabla\times\bE_0}{i\omega\mu_0}, \bH:=\frac{\nabla\times\bE}{i\omega\mu}$,
constitute an electromagnetic field that satisfies the usual homogeneous interface
conditions (continuity fn the tangential electric and magnetic fields). 
From the uniqueness theorem \ref{thm_uni_vector}, we may conclude that 
$\bE_0=0, \forall \bx \in \mathbb{R}^3/\overline{D}$ and $\bE=0, \forall \bx\in D$.

For the static case, $\omega=0$, the theorem is a consequence of the fact that the 
operators $D_0, S'_0$ and $M_0$ have a spectrum contained in the 
interval $[-\frac{1}{2},\frac{1}{2}]$ (\cite{CK1}, Chapter 5).
Due to the regularity properties of the solution at the boundary, 
in particular the fact that $\nabla\cdot\bE_0, \nabla \times\bE_0 \in C(\mathbb{R}^3/D)$ 
and $\nabla\cdot\bE,\nabla\times\bE \in C(\overline{D})$, we can apply the 
following representation formula for Laplace vector fields 
(\cite{CK1}, Theorems 4.11, 4.13):
\begin{equation}\label{rep_laplace}
\begin{aligned}
&\nabla\times S_0[\bn\times\bE]-S_0[\bn (\nabla\cdot\bE)]+
S_0[\bn\times\nabla\times\bE]-\nabla S_0[\bn\cdot\bE]=-\bE & \bx\in D\\
&\nabla\times S_{0}[\bn\times\bE_0]-
S_{0}[\bn (\nabla\cdot\bE_0)]+S_{0}[\bn\times\nabla\times\bE_0]-
\nabla S_{0}[\bn\cdot\bE_0]=\bE_0 & \bx\in \mathbb{R}^3/\overline{D} \, .
\end{aligned}
\end{equation}
Taking the curl, we obtain
\begin{equation}
\begin{aligned}
&\nabla\times\nabla\times S_0[\bn\times\bE]-
\nabla\times S_0[\bn (\nabla\cdot\bE)]+\nabla\times 
S_0[\bn\times\nabla\times\bE]=-\nabla\times\bE & \bx\in D\\
&\nabla\times\nabla\times S_0[\bn\times\bE_0]-
\nabla\times S_0[\bn (\nabla\cdot\bE_0)]+\nabla\times 
S_0[\bn\times\nabla\times\bE_0]=\nabla\times\bE_0 & \bx\in \mathbb{R}^3/\overline{D} \, .
\end{aligned}
\end{equation}
Substracting the tangential components of 
$\nabla\times\bE$ from those of $\nabla\times\bE_0$,
and using the jump conditions 
\eqref{cond_vE1}, \eqref{cond_vE2} and \eqref{cond_sE1},
we obtain
\begin{equation}
\begin{aligned}
&-\frac{\mathbf{f}}{2}\Big(1+\frac{\mu}{\mu_0}\Big)+M_0\Big[\mathbf{f}\Big(1-\frac{\mu}{\mu_0}\Big)\Big]=0,
\end{aligned}
\end{equation}
where $\mathbf{f}=\bn\times\nabla\times\bE_0$. Thus,
\begin{equation}
\begin{aligned}
&-\frac{\mathbf{f}}{2}\Big(\frac{\mu_0+\mu}{\mu_0-\mu}\Big)+M_0(\mathbf{f})=0.
\end{aligned}
\end{equation}
If $\mathbf{f}\ne 0$, then $\Big(\frac{\mu_0+\mu}{\mu_0-\mu}\Big)$ must be a real number,
say $\lambda$, between $-1$ and $1$ and
\begin{equation}
\begin{aligned}
\Big(\frac{\mu_0+\mu}{\mu_0-\mu}\Big)=\lambda\Rightarrow\mu=\mu_0\Big(\frac{\lambda-1}{\lambda+1}\Big) .
\end{aligned}
\end{equation}
As $\mu_0>0$, we must have $\mu\le 0$, a contradiction. Thus,
$\mathbf{f}=\bn\times\nabla\times\bE_0=0$. By the jump condition \eqref{cond_vE2}
and the fact that boundary data is homogeneous, 
we have $\bn\times\nabla\times\bE=0$.
The representation formulas \eqref{rep_laplace} are now simplified, 
taking the form
\begin{equation}
\begin{aligned}
&\nabla\times S_0[\bn\times\bE]-S_0[\bn (\nabla\cdot\bE)]-\nabla S_0[\bn\cdot\bE]=-\bE & \bx\in D\\
&\nabla\times S_{0}[\bn\times\bE_0]-S_{0}[\bn(\nabla\cdot\bE_0)]-\nabla S_{0}[\bn\cdot\bE_0]=\bE_0 & \bx\in \mathbb{R}^3/\overline{D} \, .
\end{aligned}
\end{equation}
Taking the divergence, we have
\begin{equation}
\begin{aligned}
&D_0[\nabla\cdot\bE]=-\nabla\cdot\bE & \bx\in D\\
&D_{0}[\nabla\cdot\bE_0]=\nabla\cdot\bE_0 & \bx\in \mathbb{R}^3/\overline{D} \, .
\end{aligned}
\end{equation}
Substracting the limiting values from the corresponding sides we obtain
\begin{equation}
\begin{aligned}
-\frac{\nabla\cdot\bE_0+\nabla\cdot\bE}{2}+D_0[\nabla\cdot\bE_0-\nabla\cdot\bE]=0&\ \ \ \bx\in \partial D.
\end{aligned}
\end{equation}
Using the jump condition \eqref{cond_sE1}, we have that 
$\nabla\cdot\bE_0=\nabla\cdot\bE=0 \ \forall \bx\in \partial D$.
Thus, the representation formulas \eqref{rep_laplace} simplify even further to
\begin{equation}
\begin{aligned}
&\nabla\times S_0[\bn\times\bE]-\nabla S_0[\bn\cdot\bE]=-\bE & \bx\in D\\
&\nabla\times S_{0}[\bn\times\bE_0]-\nabla S_{0}[\bn\cdot\bE_0]=\bE_0 & \bx\in \mathbb{R}^3/\overline{D} \, .
\end{aligned}
\end{equation}
Multiplying the first equation by $\epsilon $ and the second by $\epsilon_0$, substracting the tangential components and using the jump condition \eqref{cond_sE2}, we obtain
\begin{equation}
-\mathbf{g}\Big(\frac{\epsilon_0+\epsilon}{2}\Big)+M_0[\mathbf{g}(\epsilon_0-\epsilon)]=0,
\end{equation}
where $\mathbf{g}=\bn\times\bE_0$. By the same contradiction argument as above, 
we find that $\mathbf{g}=0$, so that $\bn\times\bE_0=0,\bn\times\bE=0$.
Thus, the representation formula \eqref{rep_laplace} is simplified again:
\begin{equation}
\begin{aligned}
&-\nabla S_0[\bn\cdot\bE]=-\bE & \bx\in D\\
&-\nabla S_{0}[\bn\cdot\bE_0]=\bE_0 & \bx\in \mathbb{R}^3/\overline{D}.
\end{aligned}
\end{equation}
Substracting the normal components, we have
\begin{equation}
\begin{aligned}
&-\frac{p}{2}\Big(1+\frac{\epsilon_0}{\epsilon}\Big)-S'_0\Big[p\Big(1-\frac{\epsilon_0}{\epsilon}\Big)\Big]=0,
\end{aligned}
\end{equation}
where $p=\bn\cdot\bE_0$. Thus,
\begin{equation}
\begin{aligned}
&-\frac{p}{2}\Big(\frac{\epsilon_0+\epsilon}{\epsilon_0-\epsilon}\Big)+S'_0[p]=0.
\end{aligned}
\end{equation}
From the spectral properties of $S'$, a simple contradiction argument shows that
$\bn\cdot\bE_0=\bn\cdot\bE=0$. Finally, from the representation formula
\eqref{rep_laplace}, we have
$\bE=0\   \forall\bx\in D$ and $\bE_0=0\ \forall\bx\in \mathbb{R}^3/\overline{D}$.

The proof for the magnetic vector transmission problem is essentially the same.
\end{proof}

\section{Existence of solutions to the vector transmission problems}
\label{pdeexist}

\begin{thm} \label{evecexist}
The vector electric transmission problem has a solution for any $\omega \ge0$.
\end{thm}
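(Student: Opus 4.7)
The plan is to apply the Fredholm alternative to the block system \eqref{veteq} and combine it with Theorem \ref{uniquethm}. First I would exhibit the operator on the left of \eqref{veteq} as a compact perturbation of an invertible diagonal multiplication; then I would reduce existence to injectivity; and finally I would prove injectivity via an ``opposite-side extension'' argument that reuses the uniqueness result already established.

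For the Fredholm structure, write the operator as $\Lambda+\mathcal{K}$, where $\Lambda$ is the diagonal multiplication by $\tfrac{\mu_0+\mu}{2}, \tfrac{\mu_0+\mu}{2}, \tfrac{\epsilon_0+\epsilon}{2}, -\tfrac{\epsilon_0+\epsilon}{2}$. Each of these scalars is nonzero by \eqref{microwave_cond}: since $\mu_0,\epsilon_0>0$ and $\mu,\epsilon$ have either positive real part or strictly positive imaginary part, both $\mu_0+\mu$ and $\epsilon_0+\epsilon$ have nonzero real or imaginary part. The remaining entries of $\mathcal{K}$ are either classical compact operators ($M_k$, $D_k$, $S_k'$ on a $C^2$ surface) or differences $S_{k_0}-S_k$ (possibly composed with tangential derivatives or with $\bn\times$, $\bn\cdot$, $\nabla\times$) whose leading Green's-function singularities cancel, giving kernels one order smoother and therefore compact on the product H\"older space $C_t^{0,\alpha}(\operatorname{Div},\partial D)\times C^{0,\alpha}(\partial D)\times C_t^{0,\alpha}(\partial D)\times C^{0,\alpha}(\partial D)$. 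Hence $\Lambda+\mathcal{K}$ is Fredholm of index zero and existence reduces to injectivity.

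To prove injectivity, suppose $(\ba,\sigma,\bb,\rho)$ lies in the kernel. Then $(\bE_0,\bE)$ defined by \eqref{eerepresentation} solves the homogeneous vector electric transmission problem: each constituent of the representation satisfies the vector Helmholtz equation in its region, the four interface conditions hold with zero data by construction, and the outgoing single layers $S_{k_0}$ give the radiation condition \eqref{radE}. Theorem \ref{uniquethm} therefore forces $\bE_0\equiv 0$ in $\mathbb{R}^3\setminus\overline{D}$ and $\bE\equiv 0$ in $D$. To recover the densities, I introduce the opposite-side extensions $\widetilde{\bE}_0$ (the exterior formula of \eqref{eerepresentation} evaluated at $\xb\in D$) and $\widetilde{\bE}$ (the interior formula evaluated at $\xb\in\mathbb{R}^3\setminus\overline{D}$). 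These satisfy vector Helmholtz equations with wavenumbers $k_0$ and $k$ in their new regions, with $\widetilde{\bE}$ outgoing since the representation is built from $S_k$ and $\Im k\ge 0$. Applying the jump relations \eqref{Skdef1}--\eqref{Mkdef} together with $\bE_0|_+=0$ and $\bE|_-=0$ expresses the tangential, normal, curl-tangential, and divergence traces of $\widetilde{\bE}_0$ and $\widetilde{\bE}$ as explicit multiples of $\ba,\sigma,\bb,\rho$; these relations make $(\widetilde{\bE},\widetilde{\bE}_0)$ a solution of the \emph{swapped} vector electric transmission problem in which the roles of $(\epsilon_0,\mu_0)$ and $(\epsilon,\mu)$ are exchanged. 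The Rellich/Green-identity proofs of Theorem \ref{uniquethm}, Theorem \ref{thm_uni_vector} and Lemma \ref{scalunique} are symmetric in the two media (the ``dual case'' already handled inside Theorem \ref{thm_uni_vector}), so the swapped problem is uniquely solvable and $\widetilde{\bE}_0\equiv 0$, $\widetilde{\bE}\equiv 0$.

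With both $\bE_0,\widetilde{\bE}_0$ (and likewise $\bE,\widetilde{\bE}$) vanishing, every jump of the representation across $\partial D$ is zero; reading these jumps off one by one via \eqref{Mkdef}, \eqref{Skdef1}, \eqref{Skdef2} peels $\ba$ from the tangential curl jump, $\rho$ from the normal-derivative jump of $\nabla S_{k_0}[\rho]$, and then $\sigma$ and $\bb$ from the two remaining traces. The main obstacle is the third step: correctly identifying the swapped boundary value problem solved by $(\widetilde{\bE},\widetilde{\bE}_0)$ and verifying that the uniqueness machinery of Appendix \ref{maxuniq} and Lemma \ref{scalunique} truly carries over with the material constants exchanged. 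The concluding bookkeeping, while tedious, is a finite linear-algebra check on the $4\times 4$ jump matrix of the representation, nonsingular precisely because the diagonal scalars $\tfrac{\mu_0+\mu}{2}$ and $\tfrac{\epsilon_0+\epsilon}{2}$ used in step one are nonzero.
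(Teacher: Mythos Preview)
Your overall plan---Fredholm alternative plus an opposite-side extension feeding back into Theorem~\ref{uniquethm}---is exactly the paper's strategy. Two concrete points, however, do not go through as written.

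\textbf{The Fredholm splitting.} It is not true that every off-diagonal block of \eqref{veteq} is compact on
$C_t^{0,\alpha}(\operatorname{Div},\partial D)\times C^{0,\alpha}(\partial D)\times C_t^{0,\alpha}(\partial D)\times C^{0,\alpha}(\partial D)$.
The terms the paper calls $B_{12},B_{13},B_{23},B_{41}$ are \emph{not} differences $S_{k_0}-S_k$; they carry distinct coefficients $\mu_0,\mu,\epsilon_0,\epsilon$, so the leading Green's-function singularity does not cancel. For instance
$B_{23}[\bb]=\nabla\!\cdot\!\big(\mu_0\epsilon_0 S_{k_0}-\mu\epsilon S_k\big)[\bb]$
has principal part $(\mu_0\epsilon_0-\mu\epsilon)\,\nabla\!\cdot S_0[\bb]$, an order-zero singular integral acting on $\bb\in C_t^{0,\alpha}$ (no $\operatorname{Div}$ control), hence bounded but not compact into $C^{0,\alpha}$. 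The paper therefore does \emph{not} write the system as diagonal plus compact; it writes it as $\mathbf{B}+\mathbf{K}$ with $\mathbf{K}$ compact and $\mathbf{B}$ a block-triangular operator containing $B_{12},B_{13},B_{23},B_{41}$, and then inverts $\mathbf{B}$ explicitly. Your ``$\Lambda+\mathcal{K}$ with $\Lambda$ diagonal'' is not enough.

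\textbf{The dual transmission problem.} Your unscaled opposite-side extensions do not satisfy the homogeneous swapped \emph{electric} problem. A direct jump computation gives, for example,
$\bn\times\widetilde{\bE}\big|_{+}=-\mu\,\ba$ and $\bn\times\widetilde{\bE}_0\big|_{-}=\mu_0\,\ba$,
so $\bn\times(\widetilde{\bE}-\widetilde{\bE}_0)=-(\mu_0+\mu)\ba\neq 0$. The paper fixes this by rescaling the extensions by $-1/\mu_0$ and $1/\mu$ respectively; with that scaling the four interface quantities become
\[
\bn\times(\widetilde{\bE}_0-\widetilde{\bE})=0,\quad
\bn\times\Big(\tfrac{\nabla\times\widetilde{\bE}_0}{\epsilon_0}-\tfrac{\nabla\times\widetilde{\bE}}{\epsilon}\Big)=0,\quad
\nabla\!\cdot\widetilde{\bE}_0-\nabla\!\cdot\widetilde{\bE}=0,\quad
\bn\cdot(\mu_0\widetilde{\bE}_0-\mu\widetilde{\bE})=0,
\]
which is the homogeneous vector \emph{magnetic} transmission problem (with the roles of interior and exterior media exchanged), not the electric one. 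Theorem~\ref{uniquethm} for the magnetic problem then yields $\widetilde{\bE}_0=\widetilde{\bE}=0$, after which the jump identities read off $\ba,\bb,\sigma,\rho$ directly. So the obstacle you flagged is real, and your proposed resolution (swapped electric problem, no rescaling) is the step that fails.
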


\begin{proof}
Let us consider the representation for the solution $\bE_0,\bE$ given by
\eqref{eerepresentation} and the corresponding integral equation \eqref{veteq},
obtained by imposing the 
interface conditions \eqref{cond_vE1} - \eqref{cond_sE2}.
We examine this Fredholm equation on the 
product space $C_t^{0,\alpha}(Div,\partial D)\times C^{0,\alpha}(\partial D)
\times C_t^{0,\alpha}(\partial D)\times C^{0,\alpha}(\partial D)$,
using the norm 
\[ \|(\mathbf{a},\sigma,\mathbf{b},\rho)\| \equiv
\max\big(\|\mathbf{a}\|_{\alpha,\partial D}+
\|\nabla_s\cdot\mathbf{a}\|_{\alpha,\partial D},\|\sigma\|_{\alpha,\partial D},
\|\mathbf{b}\|_{\alpha,\partial D},\|\rho\|_{\alpha,\partial D}\big) \, ,
\]
following the approach of \cite{CK2}.
It will be convenient to rewrite 
\eqref{veteq} in the form
\begin{equation}
\mathbf{(B+K)x=y}
\label{veteqshort}
\end{equation}
where
\begin{equation}\label{Op_B}
\mathbf{B} \equiv \left( \begin{array}{cccc}
I_1 & B_{12} & B_{13} & 0 \\
0 & I_2 & B_{23} & 0 \\
0 & 0 & I_3 & 0 \\
B_{41} & 0 & 0 & I_4 
\end{array} \right)  \, ,\quad
\mathbf{K} \equiv \left( \begin{array}{cccc}
K_{11} & 0 & 0 & K_{14} \\
0 & K_{22} & 0 & K_{24} \\
K_{31} & K_{32} & K_{33} & 0 \\
0 & K_{42} & K_{43} & K_{44}
\end{array} \right),
\end{equation}
\begin{equation}
\mathbf{x} \equiv \left( \begin{array}{c}
\ba \\  \sigma \\ \bb \\ \rho
\end{array} \right), \ \ 
\mathbf{y} \equiv
\left( \begin{array}{c}
\mathbf{f} \\  q \\ \mathbf{g} \\ p
\end{array} \right) \, .
\end{equation}
Here,
\begin{equation}
\begin{array}{ll}
B_{12}=-\bn\times\big(\mu_0 S_{k_0}-\mu S_{k}\big)(\bn\sigma), &
B_{13}=\bn\times\big( \mu_0\epsilon_0S_{k_0}-\mu\epsilon S_{k}\big)(\bb)\\
B_{23}=\nabla\cdot\Big(\mu_0\epsilon_0S_{k_0}-\mu\epsilon S_{k}\Big)(\bb), &
B_{41}=\bn\cdot\nabla\times\big(\epsilon_0\mu_0S_{k_0}-\epsilon\mu S_{k}\big)(\ba)\\
K_{11}=\big(\mu_0M_{k_0}-\mu M_{k}\big)(\ba),  &
K_{14}=\bn\times\nabla\big(S_{k_0}-S_k\big)(\rho)\\
K_{22}=\Big(\mu_0D_{k_0}-\mu D_{k}\Big)(\sigma), & 
K_{24}=-\omega^2\Big(\mu_0\epsilon_0S_{k_0}-\mu\epsilon S_{k}\Big)(\rho)\\
K_{31}=\bn\times\nabla\times\nabla\times\big(S_{k_0}-S_k\big)(\ba),& 
K_{32}=-\bn\times\nabla\times\big(S_{k_0}-S_k\big)(\bn\sigma)\\
K_{33}=\big(\epsilon_0 M_{k_0}-\epsilon M_{k}\big)(\bb),& 
K_{42}=-\bn\cdot\big(\epsilon_0\mu_0S_{k_0}-\epsilon\mu S_{k}\big)(\bn\sigma)\\
K_{43}=\bn\cdot\big(\mu_0\epsilon_0^2S_{k_0}-\mu\epsilon^2 S_{k}\big)(\bb),& 
K_{44}=\big(\epsilon_0S'_{k_0}-\epsilon S'_{k}\big)(\rho) 
\end{array}
\end{equation}
with
\[
I_1(\ba) =\frac{\mu_0+\mu}{2}\ba,\ 
I_2(\sigma) =\frac{\mu_0+\mu}{2}\sigma,\ 
I_3(\bb) =\frac{\epsilon_0+\epsilon}{2}\bb,\  
I_4(\rho) =-\frac{\epsilon_0+\epsilon}{2}\rho \, .
\]
Note that the operators $I_1,I_2,I_3,I_4$ are diagonal and invertible 
from \eqref{microwave_cond}. We will denote their inverses by 
${\cal I}_1,{\cal I}_2,{\cal I}_3,{\cal I}_4$.
Note also that the operator $\mathbf{K}$ is compact and that $\mathbf{B}$ is invertible 
with
\begin{equation}
\mathbf{B}^{-1}=
\left( \begin{array}{cccc}
{\cal I}_1 & -{\cal I}_1 {\cal I}_2B_{12} & {\cal I}_1 {\cal I}_2 {\cal I}_3(B_{12}B_{23}-B_{13}I_2) & 0 \\
0 & {\cal I}_2 & -{\cal I}_2{\cal I}_3B_{23} & 0 \\
0 & 0 & {\cal I}_3 & 0 \\
-{\cal I}_1{\cal I}_4B_{41} & {\cal I}_1{\cal I}_2{\cal I}_4B_{12}B_{14} & 
{\cal I}_1{\cal I}_2{\cal I}_3{\cal I}_4(B_{12}B_{23}B_{41}-B_{13}B_{41}I_2) & {\cal I}_4 
\end{array} \right).
\end{equation}
Thus, the Fredholm alternative can be applied and it remains only to show that,
in \eqref{veteqshort}, $\mathbf{y}=0$ implies $\mathbf{x}=0$.

For this, we begin by observing that
the corresponding fields given by \eqref{eerepresentation} satisfy the 
homogeneous vector electric transmission problem. 
By Theorem \eqref{uniquethm}, they must be $\bE_0=\bE=0$.
We now define the following auxiliary fields:
\begin{equation}
\begin{aligned}
\widetilde{\bE_0}&=-\frac{1}{\mu_0}\Big(\mu_0\nabla\times S_{k_0}(\ba)-\mu_0S_{k_0}(\bn \sigma )+\mu_0\epsilon_0 S_{k_0}(\bb)+\nabla S_{k_0}(\rho)\Big) & \bx\in D\\
\widetilde{\bE}&=\frac{1}{\mu}\Big(\mu\nabla\times S_{k}(\ba)-\mu S_{k}(\bn \sigma )+\mu\epsilon S_{k}(\bb)+\nabla S_{k}(\rho)\Big)& \bx\in \mathbb{R}^3/\overline{D}.
\end{aligned}
\end{equation}
Note that $\widetilde{\bE_0}$ is defined in the interior, 
using the exterior material parameters, and that 
$\widetilde{\bE}$ is defined in the exterior, using the interior material
parameters.
From the jump conditions for layer potentials 
\eqref{Skdef1}, \eqref{Skdef2}, \eqref{Dkdef}, \eqref{Mkdef}, we have
\begin{equation}
\begin{aligned}
\bn\times\big(\frac{\bE_0}{\mu_0}+\widetilde{\bE_0}\big)&=\ba\\
\bn\times\big(\frac{\nabla\times\bE_0}{\mu_0}+\nabla\times\widetilde{\bE_0}\big)&= \epsilon_0\bb\\
\bn\cdot \big(\frac{\bE_0}{\mu_0}+\widetilde{\bE_0}\big)&=-\frac{\rho}{\mu_0}\\
\frac{\nabla\cdot\bE_0}{\mu_0}+\nabla\cdot\widetilde{\bE_0}&=\sigma,
\end{aligned}
\end{equation}
and
\begin{equation}
\begin{aligned}
\bn\times\big(\widetilde{\bE}-\frac{\bE}{\mu}\big)&=\ba\\
\bn\times\big(\nabla\times\widetilde{\bE}-\frac{\nabla\times\bE}{\mu}\big)&=\epsilon\bb\\
\bn\cdot \big(\widetilde{\bE}-\frac{\bE}{\mu}\big)&=-\frac{\rho}{\mu}\\
\nabla\cdot\widetilde{\bE}-\frac{\nabla\cdot\bE}{\mu}&=\sigma.
\end{aligned}
\end{equation}
These, together with the fact that $\bE_0=\bE=0$, shows that
the fields $\widetilde{\bE_0},\widetilde{\bE}$ satisfy the jump conditions
\begin{equation}
\bn\times\Big(\widetilde{\bE_0}-\widetilde{\bE}\Big)=0\\
\end{equation}
\begin{equation}
\bn\times\Big(\frac{\nabla\times\widetilde{\bE_0}}{\epsilon_0}-\frac{\nabla\times\widetilde{\bE}}{\epsilon}\Big)=0
\end{equation}
\begin{equation}
\bn\cdot\big(\mu_0 \widetilde{\bE_0}-\mu\widetilde{\bE}\big)=0
\end{equation}
\begin{equation}
\nabla\cdot\widetilde{\bE_0}-\nabla\cdot\widetilde{\bE}=0.
\end{equation}
By Theorem \ref{uniquethm}, for the vector magnetic transmission problem 
(with material properties swapped), we have that 
$\widetilde{\bE_0}=\widetilde{\bE}=0$.
Finally, from the jump conditions above, we may conclude that 
$\ba=\bb=0$, $\sigma=\rho=0$.
\end{proof}

\begin{thm}
The vector magnetic transmission problem has a solution for any $\omega \ge0$.
\end{thm}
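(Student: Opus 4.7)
The plan is to mirror the proof of Theorem \ref{evecexist} almost verbatim, exploiting the fact that under the interchange $\{\epsilon,\mu\} \leftrightarrow \{\mu,\epsilon\}$ and $\{\epsilon_0,\mu_0\} \leftrightarrow \{\mu_0,\epsilon_0\}$, the representation \eqref{Hrepresentation} and the interface conditions \eqref{cond_vH1}--\eqref{cond_sH2} for the magnetic transmission problem produce the same block structure as \eqref{veteq}. First, I would impose the four conditions \eqref{cond_vH1}--\eqref{cond_sH2} on \eqref{Hrepresentation} using the jump relations \eqref{Skdef1}, \eqref{Skdef2}, \eqref{Dkdef}, and \eqref{Mkdef}, obtaining the dual system $(\mathbf{B}' + \mathbf{K}')\mathbf{x} = \mathbf{y}'$, where $\mathbf{B}'$ and $\mathbf{K}'$ are the operators from Theorem \ref{evecexist} with material parameters swapped. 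The diagonal blocks of $\mathbf{B}'$ are $(\epsilon_0+\epsilon)/2$ and $(\mu_0+\mu)/2$, which are invertible under \eqref{microwave_cond}; the off-diagonal entries of $\mathbf{K}'$ remain compact since they involve only differences of layer potentials with the same geometry.

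With compactness and invertibility of the principal part established, the Fredholm alternative reduces the problem to showing injectivity. So I would assume $\mathbf{y}' = 0$ and consider the resulting fields $\bH_0, \bH$ defined by \eqref{Hrepresentation}. By construction they satisfy the homogeneous vector magnetic transmission problem, so Theorem \ref{uniquethm} forces $\bH_0 = \bH = 0$.

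The main step, which is where the electric proof required the cleverest idea, is to extract $\ba, \sigma, \bb, \rho = 0$ from $\bH_0 = \bH = 0$. Following the template of Theorem \ref{evecexist}, I would introduce auxiliary fields
\begin{equation*}
\begin{aligned}
\widetilde{\bH_0} &= -\tfrac{1}{\epsilon_0}\bigl(\epsilon_0 \nabla\times S_{k_0}(\ba) - \epsilon_0 S_{k_0}(\bn\sigma) + \mu_0\epsilon_0 S_{k_0}(\bb) + \nabla S_{k_0}(\rho)\bigr) & \bx \in D, \\
\widetilde{\bH} &= \tfrac{1}{\epsilon}\bigl(\epsilon \nabla\times S_{k}(\ba) - \epsilon S_{k}(\bn\sigma) + \mu\epsilon S_{k}(\bb) + \nabla S_{k}(\rho)\bigr) & \bx \in \mathbb{R}^3/\overline{D},
\end{aligned}
\end{equation*}
i.e., the same representation but in the opposite region and with the complementary material. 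Using the layer-potential jump relations together with $\bH_0 = \bH = 0$, the jump conditions satisfied by $(\widetilde{\bH_0}, \widetilde{\bH})$ reduce to the homogeneous interface conditions of a vector \emph{electric} transmission problem with materials swapped (exterior $\leftrightarrow$ interior). By the uniqueness result of Theorem \ref{uniquethm} applied in that dual setting, $\widetilde{\bH_0} = \widetilde{\bH} = 0$.

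Finally, from the algebraic relations between the jumps of $\widetilde{\bH_0}, \widetilde{\bH}$ and the densities $\ba, \sigma, \bb, \rho$ (again using \eqref{Skdef1}--\eqref{Mkdef}), each density is expressible as a combination of boundary traces of the vanishing fields, and we conclude $\ba = \bb = 0$ and $\sigma = \rho = 0$. The only real obstacle is bookkeeping: verifying that the symmetry $\{\epsilon,\mu\}\leftrightarrow\{\mu,\epsilon\}$, $\{\epsilon_0,\mu_0\}\leftrightarrow\{\mu_0,\epsilon_0\}$ faithfully maps the electric jump conditions to the magnetic ones under representation \eqref{Hrepresentation}, so that both applications of Theorem \ref{uniquethm} (to the original magnetic problem and to the dual electric problem with swapped materials) apply. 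Once this is checked, the proof proceeds exactly as in Theorem \ref{evecexist}.
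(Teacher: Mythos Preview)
Your proposal is correct and is precisely the approach the paper takes: the paper's proof consists of the single sentence ``The proof is analogous to that of Theorem \ref{evecexist},'' and your proposal spells out exactly that analogy via the $\{\epsilon,\mu\}\leftrightarrow\{\mu,\epsilon\}$ swap, the dual integral system, and the auxiliary fields $\widetilde{\bH_0},\widetilde{\bH}$ leading to a vector electric transmission problem with swapped materials.
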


\begin{proof}
The proof is analogous to that of 
Theorem \ref{evecexist}.
\end{proof}

\section{Proof of Theorem \ref{stabthm}}
\label{pdestab}

\begin{proof}
It is easy to see that the 
operators $\mathbf{B}$ and $\mathbf{K}$  in \ref{Op_B} 
are bounded and compact in the function space 
$C_t^{0,\alpha}(\partial D)\times C^{0,\alpha}(\partial D)\times 
C_t^{0,\alpha}(\partial D)\times C^{0,\alpha}(\partial D)$ 
with the norm $\|(\mathbf{a},\sigma,\mathbf{b},\rho)\|:=
\max\big(\|\mathbf{a}\|_{\alpha,\partial D},
\|\sigma\|_{\alpha,\partial D},
\|\mathbf{b}\|_{\alpha,\partial D},
\|\rho\|_{\alpha,\partial D}\big)$. 
As $C_t^{0,\alpha}(Div,\partial D)$ is dense in $C_t^{0,\alpha}(\partial D)$, 
the operator $\mathbf{B+K}$ has the same nullspace as in the space analyzed
in the proof of Theorem \ref{evecexist} in Appendix \ref{pdeexist}.
Thus, it is uniquely solvable. Its inverse operator exists and is 
bounded as a map from 
$C_t^{0,\alpha}(\partial D)\times C^{0,\alpha}(\partial D)\times 
C_t^{0,\alpha}(\partial D)\times C^{0,\alpha}(\partial D)$ to  
$C_t^{0,\alpha}(\partial D)\times C^{0,\alpha}(\partial D)\times 
C_t^{0,\alpha}(\partial D)\times C^{0,\alpha}(\partial D)$. 
Due to the collective compactness of the operators, the 
continuity of the inverse is uniform in $[0,\omega_{\max}]$. 
This, together with the regularity properties of the operators in 
\eqref{eerepresentation} \cite{CK1} implies the desired estimates 
\eqref{ineq_1}, \eqref{ineq_2}, \eqref{ineq_3}, \eqref{ineq_4}.
\end{proof}

\bibliographystyle{plain}

\end{document}